\newtheorem{theorem}{Theorem}[]
\newtheorem{lemma}[]{Lemma}
\newtheorem{definition}{Definition}
\begin{document}
\title{Analysis of Age-Energy Trade-off in IoT Networks Using Stochastic Geometry}
\author{\IEEEauthorblockN{Songita Das, \emph{Graduate Student Member}, \emph{IEEE} and Gourab Ghatak, \emph{Member}, \emph{IEEE}}\\
\thanks{The preliminary version of this paper was presented at the 22nd International Symposium on Modeling and Optimization in Mobile, Ad hoc, and Wireless Networks. WiOpt, RAWNET workshop, 2024 ~\cite{10778324}. Songita Das is with the Bharti School of Telecommunication Technology and Management, Indian Institute of Technology Delhi, New Delhi 110016, India (email: bsz228102@iitd.ac.in). Gourab Ghatak is with the Department of Electrical Engineering, Indian Institute of Technology Delhi, New Delhi 110016, India (email: gghatak@ee.iitd.ac.in).}
}
\markboth{}
{Shell \MakeLowercase{\textit{et al.}}: A Sample Article Using IEEEtran.cls for IEEE Journals}
\maketitle
\begin{abstract}
We study an internet of things (IoT) network where devices harvest energy from transmitter power. IoT devices use this harvested energy to operate and decode data packets. We propose a slot division scheme based on a parameter $\xi$, where the first phase is for energy harvesting (EH) and the second phase is for data transmission. We define the joint success probability (JSP) metric as the probability of the event that both the harvested energy and the received signal-to-interference ratio (SIR) exceed their respective thresholds. We provide lower and upper bounds of (JSP), as obtaining an exact JSP expression is challenging. Then, the peak age-of-information (PAoI) of data packets is determined using this framework. Higher slot intervals for EH reduce data transmission time, requiring higher link rates. In contrast, a lower EH slot interval will leave IoT devices without enough energy to decode the packets. We demonstrate that both non-preemptive and preemptive queuing disciplines may have the same optimal slot partitioning factor for maximizing the JSP and minimizing the PAoI. For different transmit powers and deployment areas, we recommend the optimal slot partitioning factor for the above two metrics under both queuing disciplines.
\end{abstract}

\begin{IEEEkeywords}
Radio-Frequency (RF), Joint Success Probability(JSP), Peak Age of Information (PAoI), Energy Harvested (EH), Stochastic Geometry, Internet of Things (IoT), Preemptive (P), Non-Preemptive (NP), Linear (L), Non-Linear (NL).
\end{IEEEkeywords}

\section{Introduction}
\subsection{Context and Motivation}
\IEEEPARstart{T}{he} \ac{AoI} is a metric used in communication networks to assess how quickly information reaches its destination in relation to when it was created at the source~\cite{Yates}. Contrary to the classical wireless performance metrics such as throughput and delay that characterize data transfer efficiency, the \ac{AoI} adequately takes into account the freshness of data. This makes it attractive for system design in real-time \ac{IoT} applications~\cite{Kaul}. Traditional performance metrics include latency and reliability, where latency refers to the time taken for data to travel from the source to the destination, while reliability measures the likelihood of delivering data without errors or loss. However, even in systems with low latency and high reliability, the \ac{AoI} can be high, indicating that the data, though transmitted quickly and accurately, may be outdated by the time it is received. For example, in autonomous vehicle communication \cite{8322644} and drone swarm coordination \cite{8647634}, vehicles and drones communicate data regarding speed and position to ensure safety. Nonetheless, if updates occur infrequently and conditions fluctuate rapidly, even low-latency and reliable data may become obsolete, leading to a high \ac{AoI}. For instance, if a drone alters its position or a car abruptly brakes after transmitting an update, subsequent drones or vehicles may receive obsolete information, jeopardizing safety and heightening the risk of collisions or mission failure, despite the system’s low latency and high reliability. It is to be noted that in systems featuring multiple devices transmitting simultaneously, interference deteriorates communication, resulting in missed or erroneous updates that may cause collisions or other significant problems. Conversely, noise generally diminishes signal quality without entirely obstructing communication. Commonly employed downlink techniques in drone swarm coordination are dedicated ground control systems, which act as the central hub for delivering commands and data to drones. In situations where drones communicate via cellular networks, a central control system or cloud server transmits real-time updates and commands via the cellular network. Likewise, in satellite communication, the central control system uses a satellite link to send commands or data to drones thereby facilitating long-range communication.

Additionally, the devices for such IoT applications are often battery-constrained, and regular battery replacements are infeasible. To alleviate this \ac{EH} is a key solution that converts ambient energy into usable electrical power~\cite{Lu}. \ac{EH} provides a supplemental or alternative power source, allowing devices to run independently and sustainably~\cite{Kumar}. Since the wireless signals are repurposed for \ac{EH}, often it has an adverse impact on the decoding of the received signals, and consequently, the \ac{AoI}. Hence, the joint characterization and optimization of \ac{EH} and \ac{AoI} in wireless networks is critical and is the focus of this paper. 
\vspace{-0.3cm}
\subsection{Literature Review} 
Abd-Elmagid {\it et al.}~\cite{Abd-Elmagid1} provide a comprehensive overview of \ac{AoI} and its variants and their utility in designing freshness-aware IoT networks. They contrast the key differences of \ac{AoI} with respect to classical wireless network performance metrics such as throughput and delay. Furthermore, they reveal achievable \ac{AoI} regions with optimal energy sampling and transmission policies. Yates and Kaul~\cite{Yates2} analyzed \ac{AoI} in queues for stochastic hybrid systems that utilize memory-less service. Indeed, \ac{AoI} and its optimization have been the subject of extensive research in a variety of network scenarios in order to improve the freshness of data. Chen {\it et al.}~\cite{Chen} studied multi-hop wireless sensor nodes (WSNs) that employ \ac{EH} while optimizing \ac{AoI}. In particular, they formulate a problem that is NP-hard in order to minimize the same. Consequently, they propose an algorithm \texttt{MAoIG} and characterize the bounds on \ac{PAoI} and average AoI achieved by their algorithm. On the contrary, Zhu {\it et al.}~\cite{Zhu} focused on the scheduling challenges associated with minimizing \ac{AoI} in battery-free one-hop WSNs by employing a variety of \ac{EH} strategies. In particular, the authors discuss the optimal offline policy and introduce competitive online policies for minimizing \ac{AoI}. Zhang {\it et al.}~\cite{Zhang} investigated the impact of deploying mobile edge computing (MEC) servers on the computational ability of the sensor nodes. They formulated a non-convex problem to optimize the sampling rate, computation rate, and the \ac{AoI}, and proposed a successive convex approximation technique to solve the same. Kim~{\it et al.}~\cite{Kim} investigated optimal duty cycles for self-sustaining operations. For further works on joint \ac{AoI} and \ac{EH} optimization, we refer the reader to the references~\cite{Hentati, Arafa, Yao, Costa, wang, Sun, Sinha, Hirosawa}.

Rigorous optimization procedures, although useful for specific network instances, do not provide insights into the variance of performance measures across different network realizations. In this regard, stochastic geometry provides useful tools for the statistical characterization of such networks, thereby enabling the derivation of key system-design insights and dimensioning rules. Leveraging stochastic geometry, Mankar {\it et al.}~\cite{Mankar} study both preemptive and non-preemptive queuing and their impact on the \ac{AoI} in large-scale wireless networks constituting source-destination pairs. In particular, they derive tight bounds on the moments and spatial distribution of \ac{PAoI} using a bipolar Poisson point process model, which are validated through numerical results. Abd-Elmagid {\it et al.}~\cite{Abd-Elmagid2} investigate the joint transmission coverage and energy harvesting performance in a network with node locations modeled as a Poisson cluster process (PCP). They state that computing the distribution of shot noise processes associated with the PCP is challenging. Consequently, they prescribe approximations for the characterization of joint signal and energy coverage. Yang {\it et al.}~\cite{Yang22} have analyzed the spatio-temporal performance of \ac{AoI} by using queuing theory in conjunction with stochastic geometry. They showed that a decentralized scheduling policy reduces \ac{AoI} through local observations. The suggested method adjusts radio access probabilities to accommodate traffic changes to reduce \ac{PAoI}, and accommodate network expansion. Furthermore, they extended their investigation in \cite{Yang} to show that the last-come, first-served (LCFS) order has a variable impact on \ac{AoI} in relation to the deployment density, while the slotted ALOHA protocol does not reduce \ac{AoI} at low packet arrival rates.

Very few studies \cite{Sleem, Abd-Elmagid3}  have investigated \ac{EH} and \ac{AoI} jointly in uplink stochastic geometry frameworks, where source nodes perform \ac{EH}. In \cite{Sleem}, the authors have characterized the \ac{AoI} performance of the users by considering their \ac{EH} capabilities. Based on their formulation, they have optimized the channel access policy that minimizes the \ac{AoI}. On the contrary, \cite{Abd-Elmagid3} has studied the distributional properties of \ac{AoI} by assuming that the status packets and the energy packets arrive at the devices following a Poisson arrival process. This enhances our understanding of the statistics of the \ac{AoI} along with \ac{EH} capabilities. However, none of the existing works derive the efficacy of the \ac{EH} process jointly with the \ac{AoI}. In particular, when fewer transmit nodes are located in the network, this not only enhances the signal coverage due to limited interference but also reduces the ability of the users to effectively harvest energy. In order to jointly study this, we introduce the term called \ac{JSP} and investigate how the temporal resources can be partitioned in order to maximize either the \ac{JSP} or minimize the \ac{PAoI}.

\subsection{Contributions}
We consider an energy-harvesting \ac{IoT} network where the \ac{IoT} devices operate with harvested energy from the transmitters of the network. Given that the devices have harvested sufficient energy, they attempt to decode the downlink transmissions from the \ac{BS}. Based on this system, we make the following contributions:
\begin{enumerate}
\item  First, we study \ac{NL EH} based on which we derive the probability of the joint event that the harvested energy at the device exceeds an energy threshold and that the downlink \ac{SIR} exceeds an \ac{SIR} threshold. We term this as the \ac{JSP} and it determines the ability of any device in the network to simultaneously harvest the necessary energy to operate and the ability to decode the downlink data. To the best of our knowledge, this joint characterization has not yet been studied in stochastic geometry frameworks.
\item In order to allot resources between the \ac{EH} and the data transmission functionalities, we consider a slotted frame design. Next, based on the derived \ac{JSP} and by considering the slotted frame, we characterize the \ac{PAoI} at the IoT devices with respect to the frame partitioning factor $\xi$.
\item Interestingly, we show that for every given value of transmit power, there exists a unique optimal slot partitioning factor $\xi^{*}$ that, in both the preemptive and non-preemptive queuing disciplines, maximizes JSP while minimizing PAoI for a disc of a specific radius. Following this, we derive several system design insights. Based on the application of interest, the value of $\xi$ must be tuned appropriately. 
\end{enumerate}

\section{System Model and Problem Formulation} 

\subsection{Network Geometry}
We consider an \ac{IoT} network wherein the \ac{IoT} devices are served by their nearest transmitter. Let the locations of the \ac{IoT} devices be denoted by $s_{n}$, $n \in \{1, \ldots, N\}$. The transmitter locations are denoted by $r_{k}$. The typical \ac{IoT} device is denoted by $s_{1}$ while the location of the transmitter closest to the typical \ac{IoT} device is denoted by $r_{1}$. The transmitter locations are modeled as a \ac{PPP}, $\Phi$ with intensity $\lambda$ in a disk $\mathcal{B}(s_1, R)$ of radius $R$ centered at $s_1$. The distances between the typical \ac{IoT} device and the typical transmitter are represented by $d_{1}=\left\|s_{1}-r_{1}\right\|$ and the distances from the typical IoT device to any other transmitter are represented by $d_{k}=\left\|s_{1}-r_{k}\right\|$.

\subsection{Propagation Model}
Let the transmit power multiplied by the path-loss constant for all the transmitters be $P_{t}$. Accordingly, the power received by the typical IoT device from the associated transmitter in the $j-$th slot is given by $P_{R_{1}}=P_{t}g_{1}^{j}d_{1}^{-\alpha }$ while the interference power is given by $P_{R_{k}}= P_{t}\sum_{k=2}^{K}g_{k}^{j}d_{k}^{-\alpha}$. The small-scale fading gains are assumed to be exponentially distributed and denoted by $g_{1}^{j}$ for the typical transmitter and $g_{k}^{j}$ for the $k-$th interfering transmitter. Due to the \ac{PPP} assumption for the location of the interferers, $K$ is a Poisson distributed random variable with parameter $\lambda \pi R^2$. Assume that each time slot is of duration $\tau$, and is divided into two distinct phases- an \ac{EH} phase of duration $\xi \tau $ and a data transmission phase of duration $(1- \xi) \tau$. During the \ac{EH} phase, the \ac{IoT} device harvests RF energy from the downlink transmissions of all the transmitters. While, during the data transmission phase, the typical IoT device uses the harvested energy to decode the packet received from the typical transmitter. Let $E_{1}^{j}$ represent the energy harvested by the typical IoT device in the $j^{th}$ time slot. It has two components: the desired signal and the interfering signals. Mathematically,
\begin{align}
E_{1}^{j}&=\xi \tau P_{R_{1}}
=\eta\xi\tau P_{t}\left(g_{1}^{j}d_{1}^{-\alpha}+\sum_{k=2}^{K}g_{k}^{j}d_{k}^{-\alpha}\right).
\label{eq1}
\end{align}
where, $\eta$ represents energy efficiency. The typical IoT device must harvest more than $E_{th}$ amount of energy to be able to power its receiving circuitry and, hence, receive data successfully during the information reception phase.
In order to initiate the harvesting process, the input of any practical harvesting circuit must exceed the minimum power ($Pr_{min}$). Furthermore, the output of these harvesting circuits becomes saturated when the input power reaches the saturation threshold ($Pr_{max}$), resulting in a constant output despite fluctuations in input power. Therefore, the energy that is harvested in the non-linear scenario can be assessed as 
\begin{equation}
E_{1}^{j}\!\!=\!\!\begin{cases}
0 \!&,Pr<Pr_{min} \\
\eta \xi \tau P_{t}(g_{1}d_{1}^{-\alpha }\!\!+\!\!\sum_{k=2}^{K}g_{k}d_{k}^{-\alpha })\! &,Pr_{min}\!<\!Pr\!<\!Pr_{max} \\
\eta \xi \tau Pr_{max} \!&,Pr_{max}<Pr
\label{e2_lower}
\end{cases}
\end{equation}
where, $Pr_{min}=P_{t}(g_{1}d_{1}^{-\alpha }+d_{K}^{-\alpha }\sum_{k=2}^{K}g_{k})$, $Pr=P_{t}(g_{1}d_{1}^{-\alpha }+\sum_{k=2}^{K}g_{k}d_{k}^{-\alpha })$ and $Pr_{max}=P_{t}(g_{1}d_{1}^{-\alpha }+d_{1}^{-\alpha }\sum_{k=2}^{K}g_{k})$

Let $\Upsilon_{1}^{j}$ represent the \ac{SIR} measured at the typical IoT device and it is the ratio of the signal power to the interference power received at $s_1$ in the $j^{th}$ time slot. Mathematically, 
\begin{align}  
\Upsilon_{1}^{j}=\frac{P_{t}g_{1}^{j}d_{1}^{-\alpha}}{\sum_{k=2}^{K}P_{t}g_{k}^{j}d_{k}^{-\alpha}}.
\label{eq2}
\end{align}

In order to successfully decode the data packet, the \ac{SIR} at the typical \ac{IoT} device must be greater than an 
\ac{SIR} threshold $\beta$. Let the number of bits to be transmitted be given by $\sigma$ during the $(1-\xi)\tau$ faction of the slot dedicated for data transmission. Furthermore, let the required data rate, $r$ is given by $r =\frac{\sigma }{(1-\xi)\tau}$. From the Shannon-Hartley theorem, the maximum data rate is the channel capacity, i.e., $C = B \log_{2}(1+\Upsilon_{1}^{j})$, where $B$ is the bandwidth of the channel. Thus, for transmission success, we need $C > r$ or $\Upsilon_{1}^{j}>2^{\frac{r}{B}}-1$. Thus, we set our \ac{SIR} threshold as $\beta = 2^{\frac{r}{B}}-1$.

\subsection{Performance Metrics of Interest}
Our main metrics of interest are \ac{JSP} and \ac{PAoI} denoted by $\mu _{\phi }$ and $A_{i}$, respectively.
\begin{definition}
The \ac{JSP}, $\mu_{\phi}$ is defined as the probability of the joint event that the harvested energy exceeds the energy threshold $E_{th}$ and the downlink \ac{SIR} exceeds the \ac{SIR} threshold $\beta$. Mathematically,
\begin{equation}
\begin{aligned}
\mu _{\phi } = \mathbb{P}[E_{1}^{j}> E_{th},\Upsilon_{1}^{j}>\beta].
\label{alpha}
\end{aligned}
\end{equation}

\end{definition}
\begin{definition}
The \ac{AoI}, $\Delta(t)$ at time slot $t$ is defined as the time elapsed since the last received packet at the \ac{IoT} device was generated at the transmitter. Let  $G_{t_{i}}$ and $t_{i}$ denote the generation and reception time instants of the $i^{th}$ packet packet at the transmitter and the IoT device, respectively. Mathematically,
\begin{align}
    \Delta \left ( t+1 \right )=    
    \begin{cases}
    \Delta \left ( t \right )+1, & \text{ if transmission fails,} \\
t_{i}-G_{t_{i}}+1, & \text{ otherwise.}\\    
    \end{cases}
\end{align}
\end{definition}
Thus, $\Delta(t)$ increases in a staircase fashion with time slots and drops upon reception of a new packet at the destination, to the total number of slots experienced by this new packet in the system. The typical \ac{IoT} device experiences interference from other transmitters.
\begin{figure} 
\centering
\includegraphics[scale=0.35]{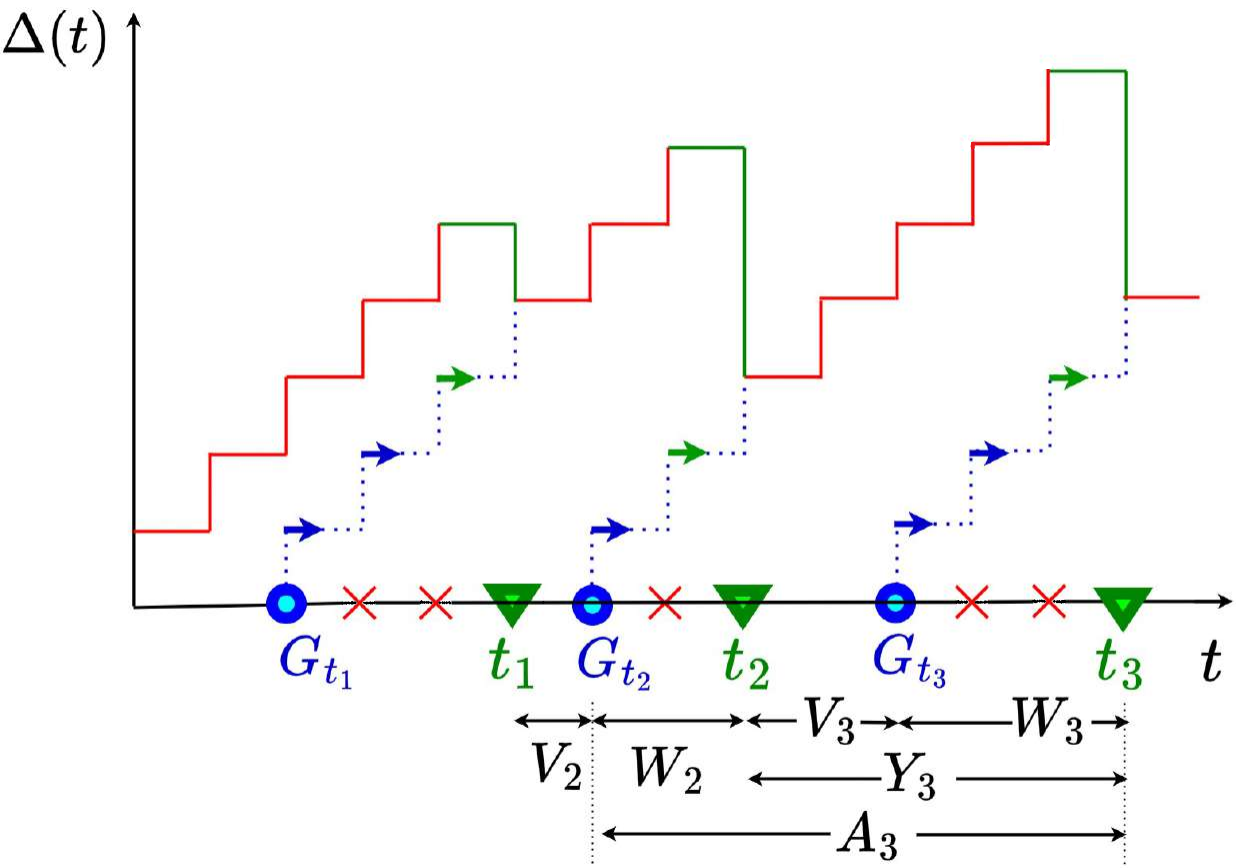}
\caption{Illustration of sample path of \ac{AoI} under non-preemptive queue discipline.}
\label{staircase_AoI}
\end{figure}
\subsection{AoI under Non-Preemptive Queue Discipline}
We assume a non-preemptive queuing model. Let the packets arrive at the transmitter following a Bernoulli process with probability $p_{a}$ at each time slot. On the arrival of a new packet, the transmitter attempts to send this packet to the \ac{IoT} device at each consecutive time slot. Until the transmission is successful, newly arrived packets are dropped. Let $W_i = t_{i}-G_{t_{i}}$ denote the time slots required for successful transmission of the $i-$th arrived packet. On successful delivery of the packet at the \ac{IoT} device, the transmitter admits new packets. Let $V_i$ denote the time slots between the successful delivery of the $(i-1)$-th packet and the arrival of the $i-$th packet. Thus, 
\begin{equation}
\begin{aligned}
Y_i = W_i + V_i,
\label{Yi}
\end{aligned}
\end{equation}
denotes the time elapsed between the successful delivery of the $(i-1)$-th and the $i$-th packets. In a given time slot, both the \ac{EH} and \ac{SIR} at the typical IoT must surpass the $E_{th}$ and ${\beta}$ thresholds simultaneously for packet transmission to be successful, the probability of which is determined by $\mu _{\phi }$. Thus, given the arrival of a packet for transmission, the success event at each time slot is independent and identically distributed. Accordingly, the transmissions on a typical link can be represented using a Geo/Geo/1 model.

Fig. \ref{staircase_AoI} depicts an \ac{AoI} process sample path. Crosses indicate instances of packets being dropped during server activity of $i^{th}$ packet trying to reach the destination. Then, the \ac{PAoI}, $A_{i}$, corresponding to the $i-$th packet measures the maximum time elapsed since the last received, i.e., the $(i-1)-${th} packet at the destination was generated. This is given by 
\begin{equation}
\begin{aligned}
A_{i} = W_{i-1}+ Y_{i}.
\label{e2}
\end{aligned}
\end{equation}
\begin{figure} 
\centering
\includegraphics[scale=0.20]{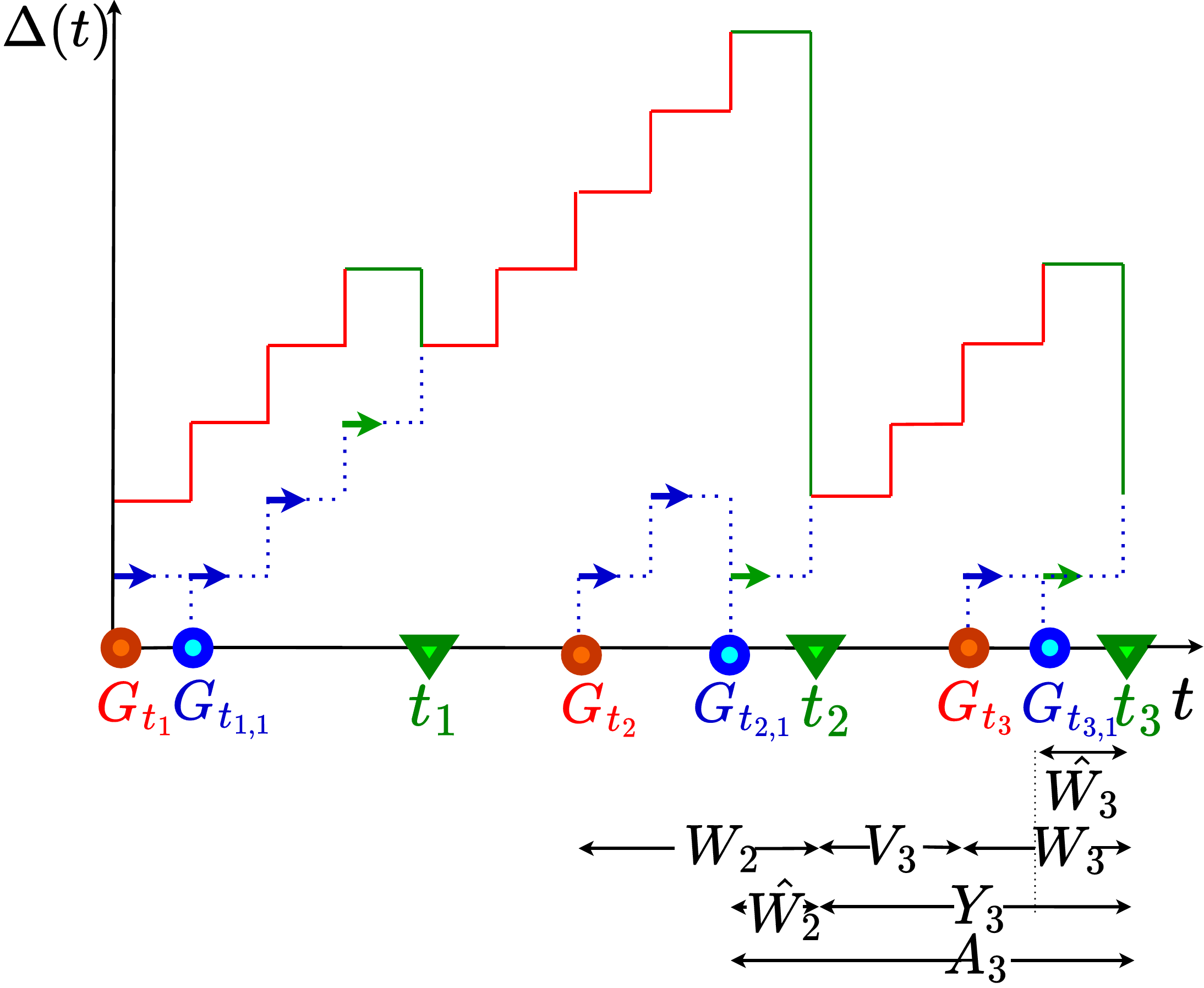}
\caption{Illustration of sample path of \ac{AoI} under preemptive queue discipline.}
\label{staircase_AoI_preemptive}
\end{figure}
\begin{figure} 
\centering
\includegraphics[scale=0.20]{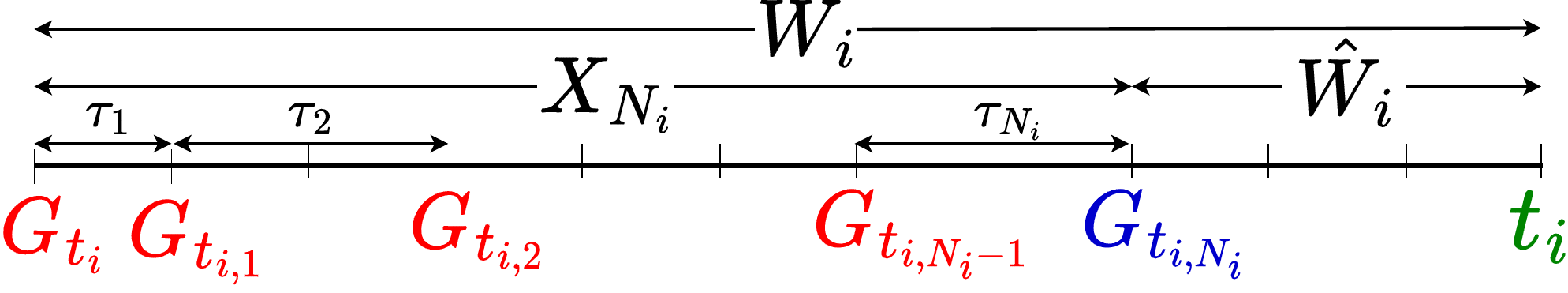}
\caption{Illustration of the process $\hat{W_{i}}$ under preemptive queue discipline.}
\label{staircase_AoI_preemptiveW}
\end{figure}
\vspace{-0.9cm}
\subsection{AoI under Preemptive Queue Discipline}
In preemptive discipline, each source transmits its most recent update in a given slot, reducing the \ac{AoI}, especially when updates arrive frequently but delivery is slow. This approach is optimal for minimizing \ac{AoI}, as the latest update is always transmitted. Fig. \ref{staircase_AoI_preemptive} illustrates the AoI process, where $G_{t_{i}}$ show update arrivals, $t_{i}$ show deliveries, and $G_{t_{i,n}}$ indicate when older updates are replaced by newer ones. This term marks the $n$-th replacement of the $i$-th update.

Let the packets arrive at the transmitter following a Bernoulli process with probability $p_{a}$ at each time slot. On the arrival of a new packet, the transmitter attempts to send this packet to the \ac{IoT} device at each consecutive time slot. If a new packet arrives at the transmitter, then the packet in service is dropped and the newly arrived packet starts getting serviced at each consecutive time slot. Let $W_{i} = t_{i}-G_{t_{i}}$ denote the time slots required for the successful reception of the $i$-th packet at the destination. $ \hat{W_{i}}$ denotes the number of time slots required to deliver the latest n-th replacement of the i-th packet at the destination and is given by $ \hat{W_{i}} = W_{i} - X_{N_{i}}$, where, $X_{N_{i}} = \sum_{n=1}^{N_{i}} \tau _{n}$ and $N_{i}$ is the number of new replacement packet arrivals occurring between the arrival and successful delivery of the $i$-th packet at the transmitter and destination \ac{IoT} device, respectively. Fig. \ref{staircase_AoI_preemptiveW} illustrates the process $\hat{W_{i}}$ under preemptive queue discipline. $\tau_{n}$ is the number of slots between the arrivals of two successive packet replacements that are i.i.d. and follow a geometric distribution with parameter $p_{a}$.

On successful delivery of the packet at the \ac{IoT} device, the transmitter admits new packets. Let $V_i$ denote the time slots between the successful delivery of the $(i-1)$-th packet and the arrival of the $i$-th packet. Thus, 
\begin{equation}
\begin{aligned}
Y_i = W_i + V_i,
\label{Yip}
\end{aligned}
\end{equation}
denotes the time elapsed between the successful delivery of the $(i-1)$-th and the $i$-th packets. In a given time slot, both the \ac{EH} and \ac{SIR} at the typical IoT must surpass the $E_{th}$ and ${\beta}$ thresholds simultaneously for packet transmission to be successful, the probability of which is determined by $\mu _{\phi }$. Thus, given the arrival of a packet for transmission, the success event at each time slot is independent and identically distributed. Accordingly, the transmissions on a typical link can be represented using a Geo/Geo/1 model.

Fig. \ref{staircase_AoI_preemptive} depicts an \ac{AoI} process sample path where we observe that the \ac{PAoI}, $A_{i}$, corresponding to the $i-$th packet measures the maximum time elapsed since the last received packet (i.e. the $(i-1)$-th packet received at $t_{i-1}$) at the destination, was generated (i.e. the n-th replacement of the $(i-1)$-th packet was generated at $G_{t_{i-1,n}}$) at the source. This is given by 
\begin{equation}
\begin{aligned}
A_{i} = \hat{W}_{i-1}+ Y_{i}.
\label{e2p}
\end{aligned}
\end{equation}
\vspace{-0.9cm}
\section{Joint Success Probability Analysis}
In this section, we analyze the performance of the proposed system model in terms of $\mu _{\phi }$. Since we focus on the impact of interference on the \ac{JSP}, in what follows, we consider that at least two transmitters are located within the region of interest. Then, the following result characterizes the distances to the nearest and farthest transmitter.
\begin{lemma}
Given that at least two points exist within the disc, the \ac{PDF} of the farthest point in the disc is
\begin{align}   \label{e9}
f_{d_{K}}(r)=\frac{2\lambda \pi r e^{-\lambda \pi \left ( R^{2}-r^{2} \right )}}{1-\left(1  + \lambda \pi R^2\right) e^{-\lambda \pi R^{2}}},  \quad r\geq 0, k\geq 2.
\end{align}
while, the PDF of the nearest point in the disc is as follows: 
\begin{align}
f_{d_{1}}(r)=\frac{2\lambda \pi r e^{-\lambda \pi r^{2}}}{1-\left(1  + \lambda \pi R^2\right) e^{-\lambda \pi R^{2}}} ,  r\geq 0.
\label{e10}
\end{align}
\end{lemma}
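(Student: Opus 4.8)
The plan is to compute the void probabilities of the homogeneous PPP $\Phi$ restricted to $\mathcal{B}(s_1,R)$ and then renormalize by the probability of the conditioning event. First I would record that the number of transmitters in $\mathcal{B}(s_1,R)$ is Poisson with mean $\lambda\pi R^2$, so that
\begin{equation}
\mathbb{P}\!\left[|\Phi\cap\mathcal{B}(s_1,R)|\ge 2\right]=1-(1+\lambda\pi R^2)e^{-\lambda\pi R^2},
\end{equation}
which is exactly the denominator appearing in both \eqref{e9} and \eqref{e10}; each conditional quantity below will be the corresponding unconditioned probability divided by this constant.

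Next I would treat the nearest transmitter. The event $\{d_1>r\}$ says that $\mathcal{B}(s_1,r)$ contains no point of $\Phi$, which has void probability $e^{-\lambda\pi r^2}$. Dividing by the normalization constant and differentiating in $r$ with a sign flip gives $f_{d_1}(r)=-\frac{d}{dr}\!\left(e^{-\lambda\pi r^2}\big/\left(1-(1+\lambda\pi R^2)e^{-\lambda\pi R^2}\right)\right)$, i.e.\ \eqref{e10}. For the farthest transmitter I would use the complementary observation: $\{d_K\le r\}$ occurs precisely when the annulus $\mathcal{B}(s_1,R)\setminus\mathcal{B}(s_1,r)$, of area $\pi(R^2-r^2)$, is empty, an event of probability $e^{-\lambda\pi(R^2-r^2)}$. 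Normalizing by the same constant and differentiating yields $f_{d_K}(r)=\frac{d}{dr}\!\left(e^{-\lambda\pi(R^2-r^2)}\big/\left(1-(1+\lambda\pi R^2)e^{-\lambda\pi R^2}\right)\right)$, i.e.\ \eqref{e9}.

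The step requiring the most care is the conditioning. Strictly speaking the void events above should be intersected with $\{|\Phi\cap\mathcal{B}(s_1,R)|\ge 2\}$ before normalizing; using the independence of $\Phi$ over the disjoint regions $\mathcal{B}(s_1,r)$ and $\mathcal{B}(s_1,R)\setminus\mathcal{B}(s_1,r)$ and their Poisson counts, this introduces extra correction terms proportional to $e^{-\lambda\pi R^2}$ that the displayed expressions drop. An equivalent route is to condition on $|\Phi\cap\mathcal{B}(s_1,R)|=k$, use that the $k$ transmitters are then i.i.d.\ uniform on the disc with radial CDF $r^2/R^2$, take the minimum and maximum of $k$ such variables, and average over the truncated Poisson law of $k$; resumming the series gives the same densities up to those terms. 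I expect pinning down exactly which normalization is intended to be the only delicate point; the remaining manipulations are routine differentiations.
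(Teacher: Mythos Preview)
The paper states this lemma without proof, so there is no argument to compare against directly. Your approach via void probabilities of the PPP and subsequent normalization is the standard one and recovers the displayed expressions.

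You are also right to flag the conditioning step as the delicate point, and in fact your diagnosis is accurate rather than merely cautious: the numerators $2\lambda\pi r\,e^{-\lambda\pi r^{2}}$ and $2\lambda\pi r\,e^{-\lambda\pi(R^{2}-r^{2})}$ are the unconditioned nearest- and farthest-point densities on the disc (equivalently, conditioned on $K\ge 1$), whereas the common denominator $1-(1+\lambda\pi R^{2})e^{-\lambda\pi R^{2}}=\mathbb{P}[K\ge 2]$ reflects the stronger conditioning. Under a strict $\{K\ge 2\}$ conditioning the numerators would pick up the additional $e^{-\lambda\pi R^{2}}$-order terms you describe, and indeed $\int_{0}^{R}f_{d_{1}}(r)\,dr=\frac{1-e^{-\lambda\pi R^{2}}}{1-(1+\lambda\pi R^{2})e^{-\lambda\pi R^{2}}}>1$, so \eqref{e10} (and likewise \eqref{e9}) is not a bona fide conditional density. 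Your sketch therefore both reproduces the stated formulas and correctly identifies where they depart from a rigorous conditional law; the paper adopts this heuristic normalization without further comment.
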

Let us recall that the \ac{JSP} is defined as the joint event that $E_{1}^{j}> E_{th}$ and $\Upsilon_{1}^{j}> \beta$. Substituting the value of the harvested energy and the \ac{SIR} in equation \eqref{alpha} we get,
\begin{equation}
\begin{aligned} 
\!\mu_{\phi}\!=\!\mathbb{P}
      \Bigg (\!\!g_{1}^{j}\!>\!d_{1}^{\alpha}\! \Bigg [\!\frac{E_{th}}{\eta\xi\tau P_{t}}\!-\!\sum_{k=2}^{K}\!g_{k}^{j}d_{k}^{-\alpha}\! \Bigg ],
g_{1}^{j}\!>\!\beta d_{1}^{\alpha}\!\sum_{k=2}^{K}\!g_{k}^{j}d_{k}^{-\alpha}\! \Bigg ).
\end{aligned}
\end{equation}
The characterization of the above is challenging since it involves the derivation of the distribution of the sum of weighted exponential terms of the form $\sum_{k=2}^{K}g_{k}^{j}d_{k}^{-\alpha}$. Thus, obtaining the closed-form expression of $\mu _{\phi }$ is infeasible. Consequently, we focus on deriving the upper bound $\mu _{\phi ,U}$ and the lower bound $\mu _{\phi ,L}$, of $\mu_{\phi }$. 
The idea behind the upper and the lower bounds of the \ac{JSP} is as follows. If all the interfering transmitters are placed at the farthest location $d_{k}$ in the disc, then energy harvested by the typical IoT is minimum. Similarly, if all the interfering transmitters are placed at the nearest location $d_{1}$ in the disc, then the energy harvested by the typical IoT is maximum. Let $E_{1}^{j, min}$ and $E_{1}^{j, max}$ denote the minimum and maximum harvested energy respectively. With the help of \eqref{eq1}, we can write them as follows: 
\begin{equation}
    \begin{aligned}  \label{e11_lower}
    E_{1}^{j, min}=\eta\xi\tau P_{t}\left(g_{1}^{j}d_{1}^{-\alpha}+d_{K}^{-\alpha}\sum_{k=2}^{K}g_{k}^{j}\right).
    \end{aligned}
\end{equation}
\begin{equation}
    \begin{aligned}   \label{e11_upper}
    E_{1}^{j, max}=\eta\xi\tau P_{t}\left(g_{1}^{j}d_{1}^{-\alpha}+d_{1}^{-\alpha}\sum_{k=2}^{K}g_{k}^{j}\right).
    \end{aligned}
\end{equation}
where $E_{1}^{j}$ is energy harvested by $s_{1}$ node in $\xi\tau$ time, which is then used to keep $s_{1}$ node active such that it is able to read the desired transmitted signal from $r_{1}$ to $s_{1}$ in the remaining $(1-\xi)\tau$ time. Similarly, if all the interfering transmitters are placed in the same circumference as the signal transmitter i.e., at the nearest distance $d_{1}$ to the typical IoT, then this causes maximum interference to the typical IoT, hence \ac{SIR} experienced by it is minimum and vice versa, as clear from the expression of \ac{SIR}, 
\begin{equation}
\begin{aligned}   \label{SII_min}
\Upsilon_{1}^{j, min}=\frac{P_{t}g_{1}^{j}d_{1}^{-\alpha}}{d_{1}^{-\alpha}{\sum_{k=2}^{K}P_{t}g_{k}^{j}}}.
\end{aligned}
\end{equation}
\begin{equation}
\begin{aligned}   \label{SII_max}
\Upsilon_{1}^{j, max}=\frac{P_{t}g_{1}^{j}d_{1}^{-\alpha}}{d_{K}^{-\alpha}{\sum_{k=2}^{K}P_{t}g_{k}^{j}}}.
\end{aligned}
\end{equation}
Case A: $Pr<Pr_{min}$.\\
Accordingly, we can write $\mu _{\phi ,L}$ and $\mu _{\phi ,U}$ as $\mu _{\phi ,L}=\mathbb{P}[(E_{1}^{j}> E_{th}) , (\Upsilon_{1}^{j,min}> \beta )] = 0$ and $\mu _{\phi ,U} = \mathbb{P}[(E_{1}^{j}> E_{th}),(\Upsilon_{1}^{j,max}> \beta )] = 0$.\\ \\
Case B: $Pr_{min}\!<\!Pr\!<\!Pr_{max} $.\\
Accordingly, we can write $\mu _{\phi ,L}$ and $\mu _{\phi ,U}$ as $\mu _{\phi ,L}=\mathbb{P}[(E_{1}^{j,min}> E_{th}) 
,    (\Upsilon_{1}^{j,min}> \beta )]$ and $\mu _{\phi ,U} = \mathbb{P}[(E_{1}^{j,max}> E_{th}),(\Upsilon_{1}^{j,max}> \beta )]$. The analytical expression of $\mu _{\phi ,L}$ and $\mu _{\phi ,U}$ are given in Theorem~\ref{theorem1} and Theorem~\ref{theorem2}, respectively.
\begin{theorem}   \label{theorem1}
The lower bound of $\mu _{\phi }$ is given by 
\begin{equation}
\begin{aligned}
\!\mu _{\phi ,L}\!\!=\!\! & \sum_{k = 2}^{\infty }\!\!\Big [\!\!\int_{d_{1}=0}^{R} \!\!\int_{d_{K}=d_{1}}^{R}\!\!\int_{z=0}^{\!\!\frac{E_{th}}{\eta \xi \tau P_{t}(\beta d_{1}^{-\alpha }\!+\!d_{K}^{-\alpha }\!)}}\!\!(\!\frac{z^{k-1}}{(k-1)!}\!)\!\exp (\frac{-E_{th}d_{1}^{ \alpha }}{\eta \xi \tau P_{t}}\\
& +d_{1}^{\alpha }d_{K}^{-\alpha }z-z 
 ) f_{d_{1}}(r)f_{d_{K}}(r)f_{K}(k)\Big ]d_{z} d_{d_{K}} d_{d_{1}}+ \\ 
 & \sum_{k = 2}^{\infty }\Big[\int_{d_{1}=0}^{R}\int_{d_{K}=d_{1}}^{R}
 \int_{z=\frac{E_{th}}{\eta \xi \tau P_{t}(\beta d_{1}^{-\alpha }+d_{K}^{-\alpha })}}^{\infty}(\frac{z^{k-1}}{(k-1)!})\\ 
 \nonumber
\end{aligned}
\end{equation}
\begin{equation}
\begin{aligned}
 & \exp(-\beta z-z)f_{d_{1}}(r)f_{d_{K}}(r)f_{K}(k)\Big ] d_{z} d_{d_{K}} d_{d_{1}}, 
\label{theorem1eq}
\end{aligned}
\end{equation}
where $f_{K}(k) = \frac{e^{-\lambda \pi R^{2}}(\lambda \pi R^{2})^{k}}{k!}$ and $f_{d_{1}}(r)$ and $f_{d_{K}}(r)$ are obtained from \eqref{e10} and \eqref{e9}.
\end{theorem}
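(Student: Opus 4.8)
The plan is to begin from the Case~B identity $\mu_{\phi,L}=\mathbb{P}[E_{1}^{j,min}>E_{th},\ \Upsilon_{1}^{j,min}>\beta]$ recorded just before the statement, substitute \eqref{e11_lower} and \eqref{SII_min}, and exploit the cancellation of $d_{1}^{-\alpha}$ in $\Upsilon_{1}^{j,min}$, so that the joint event reduces to
\[
g_{1}^{j}d_{1}^{-\alpha}+d_{K}^{-\alpha}\sum_{k=2}^{K}g_{k}^{j}>\frac{E_{th}}{\eta\xi\tau P_{t}}
\qquad\text{and}\qquad
g_{1}^{j}>\beta\sum_{k=2}^{K}g_{k}^{j}.
\]
I would then condition on $K=k$ and on the distance pair $(d_{1},d_{K})$, which lives in $\{0\le d_{1}\le d_{K}\le R\}$ (hence the nested limits in \eqref{theorem1eq}). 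Writing $Z:=\sum_{k=2}^{K}g_{k}^{j}$, the summands are i.i.d.\ unit-mean exponentials, so conditioned on $K$ the variable $Z$ has an Erlang density and is independent of the desired-link fade $g_{1}^{j}\sim\mathrm{Exp}(1)$; the probability thus becomes an expectation over $Z$, over the distances, and over $K$.

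The core step is that both surviving constraints are lower bounds on $g_{1}^{j}$: the energy constraint forces $g_{1}^{j}>d_{1}^{\alpha}\big(\tfrac{E_{th}}{\eta\xi\tau P_{t}}-d_{K}^{-\alpha}Z\big)$ and the SIR constraint forces $g_{1}^{j}>\beta Z$, so conditionally the joint event is $\{g_{1}^{j}>\max(\cdot,\cdot)\}$. Equating the two thresholds yields the switch value $z^{*}=\frac{E_{th}}{\eta\xi\tau P_{t}(\beta d_{1}^{-\alpha}+d_{K}^{-\alpha})}$, precisely the inner integration limit in \eqref{theorem1eq}. Because the energy threshold is affine decreasing in $Z$ while $\beta Z$ is affine increasing and the two meet at $z^{*}$, for $Z\le z^{*}$ the energy threshold dominates --- and since $z^{*}<\tfrac{E_{th}d_{K}^{\alpha}}{\eta\xi\tau P_{t}}$ it remains non-negative there --- giving $\mathbb{P}[g_{1}^{j}>\cdot\mid Z]=\exp\!\big(-\tfrac{E_{th}d_{1}^{\alpha}}{\eta\xi\tau P_{t}}+d_{1}^{\alpha}d_{K}^{-\alpha}Z\big)$, while for $Z>z^{*}$ the SIR threshold dominates and $\mathbb{P}[g_{1}^{j}>\cdot\mid Z]=e^{-\beta Z}$; in each sub-range the dominated constraint is automatically implied, so no mass is double-counted or lost.

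To conclude I would average: integrate the two conditional probabilities against the Erlang density of $Z$ over $[0,z^{*}]$ and $[z^{*},\infty)$ respectively (this supplies the polynomial-in-$z$ factor of \eqref{theorem1eq}, with the accompanying $e^{-z}$ absorbed into the exponents), then integrate against the distance densities $f_{d_{1}}$ and $f_{d_{K}}$ from \eqref{e10} and \eqref{e9} over $\{0\le d_{1}\le d_{K}\le R\}$, and finally sum over $k\ge 2$ against the Poisson pmf $f_{K}(k)$; Fubini applies throughout by the conditional independence of $g_{1}^{j}$, $Z$, $K$ and the distances. I expect the $\max$ case split to be the main obstacle --- correctly locating $z^{*}$ and checking the non-negativity and consistency conditions so that the complementary CDF $\mathbb{P}[g_{1}^{j}>x]=e^{-x}$ is applied only where valid; the remaining manipulation of the nested integrals is routine. (The regime $Pr<Pr_{min}$ of Case~A contributes zero, as already observed.)
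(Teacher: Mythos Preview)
Your proposal is correct and follows essentially the same approach as the paper's proof in Appendix~A: you rewrite both events as lower bounds on $g_{1}^{j}$, split on $\max(\Theta_{1},\Theta_{2})$ at the crossover $z^{*}=\frac{E_{th}}{\eta\xi\tau P_{t}(\beta d_{1}^{-\alpha}+d_{K}^{-\alpha})}$, apply the exponential CCDF of $g_{1}^{j}$ in each region, and then average against the Erlang density of $Z$, the distance densities, and the Poisson pmf of $K$. Your treatment of the non-negativity check (that $z^{*}<E_{th}d_{K}^{\alpha}/(\eta\xi\tau P_{t})$, so the energy threshold stays non-negative on $[0,z^{*}]$) is exactly the argument the paper uses.
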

\begin{proof}
A detailed proof is given in Appendix A.    
\end{proof}
\begin{theorem}   \label{theorem2}
The upper bound of $\mu _{\phi }$ is given by 
\begin{equation}
\begin{aligned}
\!\mu _{\phi ,U}\!\!=\!\! & \sum_{k = 2}^{\infty }\!\!\Big [\!\!\int_{d_{1}=0}^{R} \!\!\int_{d_{K}=d_{1}}^{R}\int_{z=0}^{\frac{E_{th}}{\eta \xi \tau P_{t}(\beta d_{K}^{-\alpha }+d_{1}^{-\alpha })}} (\frac{z^{k-1}}{(k-1)!})\\&\exp(\!\frac{-E_{th}d_{1}^{ \alpha }}{\eta \xi \tau P_{t}})f_{d_{1}}(r)f_{d_{K}}(r)f_{K}(k)\Big ]d_{z} d_{d_{K}} d_{d_{1}}+ \\
& \sum_{k = 2}^{\infty }\Big[\int_{d_{1}=0}^{R}\int_{d_{K}=d_{1}}^{R}
\int_{z=\frac{E_{th}}{\eta \xi \tau P_{t}(\beta d_{K}^{-\alpha }+d_{1}^{-\alpha })}}^{\infty}(\frac{z^{k-1}}{(k-1)!})\\&\exp(-\beta d_{1}^{\alpha } d_{K}^{-\alpha } z-z)  f_{d_{1}}(r)f_{d_{K}}(r)f_{K}(k)\Big ] d_{z} d_{d_{K}} d_{d_{1}}, 
\label{theoremUeq}
\end{aligned}
\end{equation}
where, $f_{K}(k) = \frac{e^{-\lambda \pi R^{2}}(\lambda \pi R^{2})^{k}}{k!}$ and $f_{d_{1}}(r)$ and $f_{d_{K}}(r)$ are obtained from \eqref{e10} and \eqref{e9}.
\end{theorem}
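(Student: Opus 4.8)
The plan is to mirror the derivation of Theorem~\ref{theorem1} (carried out in Appendix~A), but now replacing the true harvested energy and SIR by their \emph{upper}-bounding surrogates: the energy surrogate $E_{1}^{j,max}$ of \eqref{e11_upper}, obtained by co-locating every interferer at the nearest distance $d_{1}$, and the SIR surrogate $\Upsilon_{1}^{j,max}$ of \eqref{SII_max}, obtained by co-locating every interferer at the farthest distance $d_{K}$. Since $d_{1}\le d_{k}\le d_{K}$ for all $k$, we have $E_{1}^{j,max}\ge E_{1}^{j}$ and $\Upsilon_{1}^{j,max}\ge \Upsilon_{1}^{j}$ realization-by-realization, so $\{E_{1}^{j}>E_{th}\}\cap\{\Upsilon_{1}^{j}>\beta\}\subseteq\{E_{1}^{j,max}>E_{th}\}\cap\{\Upsilon_{1}^{j,max}>\beta\}$ and hence $\mu_{\phi}\le \mathbb{P}[E_{1}^{j,max}>E_{th},\,\Upsilon_{1}^{j,max}>\beta]=:\mu_{\phi,U}$; it therefore suffices to evaluate the right-hand side in closed form.

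First I would condition on the number of interferers $K=k$, on the nearest and farthest distances $(d_{1},d_{K})$, and write $I:=\sum_{k=2}^{K}g_{k}^{j}$ for the aggregate interfering fading gain. Using \eqref{e11_upper} and \eqref{SII_max} and isolating the desired fading gain $g_{1}^{j}$, the joint event becomes $\big\{g_{1}^{j}>\tfrac{E_{th}d_{1}^{\alpha}}{\eta\xi\tau P_{t}}-I\big\}\cap\big\{g_{1}^{j}>\beta d_{1}^{\alpha}d_{K}^{-\alpha}I\big\}$, so the conditional probability is $\mathbb{P}\big[g_{1}^{j}>\max\!\big(\tfrac{E_{th}d_{1}^{\alpha}}{\eta\xi\tau P_{t}}-I,\ \beta d_{1}^{\alpha}d_{K}^{-\alpha}I\big)\big]$; since $g_{1}^{j}\sim\mathrm{Exp}(1)$ this equals $\exp\big(-[\max(\cdot,\cdot)]^{+}\big)$.

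The key step is the case split on the value $z$ taken by $I$. The first argument of the max decreases in $z$ and the second increases in $z$, so they cross at $z^{*}:=\frac{E_{th}}{\eta\xi\tau P_{t}(\beta d_{K}^{-\alpha}+d_{1}^{-\alpha})}$, which is exactly the integration limit in \eqref{theoremUeq}. For $z<z^{*}$ the energy term dominates and is strictly positive (indeed $z^{*}<E_{th}d_{1}^{\alpha}/(\eta\xi\tau P_{t})$), so the conditional probability is $\exp\big(-\tfrac{E_{th}d_{1}^{\alpha}}{\eta\xi\tau P_{t}}+z\big)$; for $z>z^{*}$ the SIR term dominates and is positive, giving $\exp\big(-\beta d_{1}^{\alpha}d_{K}^{-\alpha}z\big)$. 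Finally I would de-condition: given $k$ interferers the aggregate gain $I$ is Gamma distributed, and its density contributes the $\tfrac{z^{k-1}}{(k-1)!}$ polynomial factor together with an $e^{-z}$ factor; in the region $z<z^{*}$ this $e^{-z}$ cancels the $e^{+z}$ above and leaves the $\exp(-E_{th}d_{1}^{\alpha}/(\eta\xi\tau P_{t}))$ integrand, while in the region $z>z^{*}$ it combines with $\exp(-\beta d_{1}^{\alpha}d_{K}^{-\alpha}z)$ to give the $\exp(-\beta d_{1}^{\alpha}d_{K}^{-\alpha}z-z)$ integrand. Averaging over $(d_{1},d_{K})$ with $d_{1}\le d_{K}$ using the distance densities \eqref{e10} and \eqref{e9}, and over $k\ge 2$ with $f_{K}(k)$, reproduces \eqref{theoremUeq}.

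The proof is short and essentially parallels Appendix~A; I expect the only point requiring care to be the bookkeeping of the case split --- pinning down $z^{*}$ and checking that in each of the two regimes the dominating threshold is nonnegative, so that $\mathbb{P}[g_{1}^{j}>\cdot]$ is a genuine exponential and not clipped at $1$ --- and folding the Gamma density's $e^{-z}$ factor into the conditional success probability consistently; the substitution of the bounding surrogates and the resulting triple integral are routine.
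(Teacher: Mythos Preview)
Your proposal is correct and follows essentially the same route as the paper's own Appendix~B: both substitute the surrogates \eqref{e11_upper} and \eqref{SII_max}, isolate $g_{1}^{j}$ to reduce the joint event to $g_{1}^{j}>\max(\Theta_{3},\Theta_{4})$, split at the crossover $z^{*}=E_{th}/(\eta\xi\tau P_{t}(\beta d_{K}^{-\alpha}+d_{1}^{-\alpha}))$, and then average against the Gamma density of $Z$ and the distance and count laws. Your write-up even makes explicit the realization-wise monotonicity argument for $\mu_{\phi}\le\mu_{\phi,U}$ and the cancellation of the $e^{-z}$ factor in the $z<z^{*}$ branch, both of which the paper leaves implicit.
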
 
\begin{proof}
A detailed proof is given in Appendix B.    
\end{proof}
Case C: $Pr_{max}<Pr$.\\
\begin{theorem}   \label{theorem3}
The lower bound of $\mu _{\phi }$ is given by 
\begin{equation}
\begin{aligned}
\!\mu_{\phi,L}\!\!=\!\!&\sum_{k=2}^{\infty}\int_{d_{1}=0}^{R}\int_{z=0}^{\frac{E_{th}d_{1}^{\alpha }}{\eta \xi \tau P_{t}(1+\beta )}}\frac{z^{k-1}}{(k-1)!}exp\left (- \left [\frac{E_{th}d_{1}^{\alpha }}{\eta \xi \tau P_{t}}  \right ] \right )\\&f_{d_{1}}(r)f_{K}(k)d_{z}d_{d_{1}} +\sum_{k=2}^{\infty }\int_{d_{1}=0}^{R}\int_{z=\frac{E_{th}d_{1}^{\alpha }}{\eta \xi \tau P_{t}(1+\beta )}}^{\infty }\frac{z^{k-1}}{(k-1)!}\\&exp(-(\beta +1)z)f_{d_{1}}(r)f_{K}(k)d_{z}d_{d_{1}}
\label{theorem1eqC}
\end{aligned}
\end{equation}
where $f_{K}(k) = \frac{e^{-\lambda \pi R^{2}}(\lambda \pi R^{2})^{k}}{k!}$ and $f_{d_{1}}(r)$ and $f_{d_{K}}(r)$ are obtained from \eqref{e10} and \eqref{e9}.
\end{theorem}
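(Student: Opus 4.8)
\textbf{Proof proposal for Theorem~\ref{theorem3}.}

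The plan is to mirror the argument of Theorem~\ref{theorem1}, but in the saturation regime $Pr_{max} < Pr$, where by \eqref{e2_lower} the harvested energy no longer depends on the interferer locations $d_k$ at all: it is simply $E_1^{j} = \eta\xi\tau Pr_{max} = \eta\xi\tau P_t(g_1^{j}d_1^{-\alpha} + d_1^{-\alpha}\sum_{k=2}^{K}g_k^{j})$. Since the saturation cap is attained, the \emph{minimum} harvested energy over interferer placements equals this same quantity (the energy term is already saturated, so the lower-bounding "push interferers to $d_K$" step collapses). Meanwhile, for the \ac{SIR} we still take the worst case $\Upsilon_1^{j,min}$ from \eqref{SII_min}, obtained by placing all interferers at $d_1$, which gives $\Upsilon_1^{j,min} = g_1^{j}/\sum_{k=2}^{K}g_k^{j}$. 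Hence $\mu_{\phi,L} = \mathbb{P}\big[\eta\xi\tau P_t d_1^{-\alpha}(g_1^{j} + S) > E_{th},\ g_1^{j} > \beta S\big]$ where $S := \sum_{k=2}^{K}g_k^{j}$. This is the object to be computed.

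First I would condition on $K = k$ (Poisson with mean $\lambda\pi R^2$, pmf $f_K(k)$) and on $d_1 = r$ (density $f_{d_1}$ from \eqref{e10}); note $d_K$ does not enter, which is why $f_{d_K}$ drops out of the integrand even though it is quoted in the statement. Conditioned on $K=k$, the sum $S$ of $k-1$ i.i.d.\ unit-mean exponentials is Erlang$(k-1,1)$ with density $z^{k-1}/(k-1)!\cdot e^{-z}$... more precisely $z^{k-2}e^{-z}/(k-2)!$, but the paper's convention (visible in Theorems~1--2) writes the Erlang weight as $z^{k-1}/(k-1)!$, so I would follow that indexing. Next I would condition on $S = z$ and evaluate the probability over $g_1^{j}\sim\mathrm{Exp}(1)$ only: the two constraints become $g_1^{j} > E_{th}d_1^{\alpha}/(\eta\xi\tau P_t) - z$ and $g_1^{j} > \beta z$. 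Which constraint binds is determined by comparing the two thresholds, i.e.\ by whether $z \lessgtr E_{th}d_1^{\alpha}/\big(\eta\xi\tau P_t(1+\beta)\big)$ — exactly the breakpoint appearing in \eqref{theorem1eqC}. For $z$ below the breakpoint the energy constraint dominates and $\mathbb{P}[g_1^{j} > \text{threshold}] = \exp\!\big(-(E_{th}d_1^{\alpha}/(\eta\xi\tau P_t) - z)\big)$, which after absorbing the $e^{z}$ against part of the Erlang exponential and recombining gives the first double integral with integrand $\propto \exp(-E_{th}d_1^{\alpha}/(\eta\xi\tau P_t))$; for $z$ above the breakpoint the \ac{SIR} constraint dominates and $\mathbb{P}[g_1^{j} > \beta z] = e^{-\beta z}$, which combined with the Erlang $e^{-z}$ yields the $\exp(-(\beta+1)z)$ factor in the second integral. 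Summing over $k$ and integrating over $d_1$ and $z$ then reproduces \eqref{theorem1eqC} term by term.

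The main subtlety — and the step I would be most careful about — is the justification that in the saturation case the lower bound uses the \emph{true} (unperturbed) energy expression rather than a pessimistic relocation of interferers: one must argue that because $Pr > Pr_{max}$ forces $E_1^{j}$ to its saturated value $\eta\xi\tau Pr_{max}$, the "all interferers at $d_K$" relocation used for lower-bounding in Case~B is no longer the right worst case; instead the only worst-casing left is on the \ac{SIR}, via $\Upsilon_1^{j,min}$. A secondary point to check is consistency of the Erlang index $z^{k-1}/(k-1)!$ versus the standard $z^{k-2}/(k-2)!$ for a sum of $k-1$ exponentials, and the handling of the $k=2$ boundary term; I would either adopt the paper's stated convention verbatim for uniformity with Theorems~1--2 or flag the shift. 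Everything else is the same elementary conditioning-and-exponential-integral bookkeeping already carried out in Appendices~A and~B, so I would cite that structure and only spell out the places where the saturation regime changes the integrand.
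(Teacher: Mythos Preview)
Your proposal is correct and follows essentially the same route as the paper's proof in Appendix~C: the paper likewise starts from $\mu_{\phi,L}=\mathbb{P}[E_1^{j}>E_{th},\,\Upsilon_1^{j,\min}>\beta]$ with the saturated energy expression and the worst-case \ac{SIR}, defines $\Theta_5,\Theta_6$ as your two thresholds on $g_1^{j}$, splits into $R_5+R_6$ according to which threshold dominates (your breakpoint $z=E_{th}d_1^{\alpha}/(\eta\xi\tau P_t(1+\beta))$), and then integrates against the Erlang density exactly as you describe. Your side remarks about $f_{d_K}$ being unused and the Erlang index convention are accurate observations about the paper's notation rather than departures from its method.
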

\begin{proof}
A detailed proof is given in Appendix C.    
\end{proof}

The upper bound of $\mu _{\phi }$, ($\mu _{\phi ,U}$) is same as Theorem \ref{theorem2}.  

\section{\ac{PAoI} Analysis}
The \ac{JSP} influences the \ac{PAoI} since it directly characterizes the probability of success or failure in a given time slot. Recall that for the non-preemptive scenario, newly arriving packets at the transmitter are dropped until the one in transmission is successfully delivered. Based on the derived framework for \ac{JSP}, the following Theorem characterizes an upper bound of the \ac{PAoI} of the system.
\begin{theorem}   \label{theorem4}
The upper bound of the \ac{PAoI} measured at the typical \ac{IoT} device is  
\begin{equation}
\begin{aligned}
&A_{i,NP}^{U}= Z_{a} + \frac{2}{\mu _{\phi,L}},
\label{e16}
\end{aligned}
\end{equation}
where lower bound of $\mu _{\phi }$ gives an upper bound of the \ac{PAoI}. 
\end{theorem}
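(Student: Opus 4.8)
The plan is to derive $A_{i,NP}^{U}$ by first computing the expected peak age $\mathbb{E}[A_i]$ exactly under the Geo/Geo/1 model established in the system model, and then replacing the true per-slot success probability $\mu_\phi$ with its lower bound $\mu_{\phi,L}$ to obtain a valid upper bound. Recall from \eqref{e2} that $A_i = W_{i-1} + Y_i = W_{i-1} + W_i + V_i$, where $W_i$ is the number of slots to successfully transmit the $i$-th admitted packet and $V_i$ is the number of slots between a successful delivery and the next packet arrival. Since the system model argues that, given a packet is in service, the success event in each slot is i.i.d.\ with probability $\mu_\phi$, each $W_i$ is geometric with parameter $\mu_\phi$, so $\mathbb{E}[W_i] = 1/\mu_\phi$. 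Likewise, under the Bernoulli($p_a$) arrival process, $V_i$ is geometric with parameter $p_a$, giving $\mathbb{E}[V_i] = 1/p_a$ (or $(1-p_a)/p_a$ depending on the support convention).

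The key steps, in order, are: (i) write $\mathbb{E}[A_i] = \mathbb{E}[W_{i-1}] + \mathbb{E}[W_i] + \mathbb{E}[V_i]$ using linearity of expectation; (ii) substitute the geometric means to get $\mathbb{E}[A_i] = \tfrac{2}{\mu_\phi} + \tfrac{1}{p_a}$; (iii) identify the arrival-related term $\tfrac{1}{p_a}$ (or whatever constant the paper's conventions produce) with the quantity denoted $Z_a$ in the statement, so that $\mathbb{E}[A_i] = Z_a + \tfrac{2}{\mu_\phi}$; and (iv) observe that $x \mapsto 2/x$ is monotonically decreasing, so since $\mu_{\phi,L} \le \mu_\phi$ we have $\tfrac{2}{\mu_\phi} \le \tfrac{2}{\mu_{\phi,L}}$, yielding $\mathbb{E}[A_i] \le Z_a + \tfrac{2}{\mu_{\phi,L}} =: A_{i,NP}^{U}$. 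Theorem~\ref{theorem1} then supplies the explicit expression for $\mu_{\phi,L}$ to make the bound computable.

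The one subtlety worth care is the independence justification needed at step (ii): $W_{i-1}$ and $W_i$ refer to service times of distinct admitted packets, so they are independent (and identically distributed) copies, which is exactly what the Geo/Geo/1 i.i.d.-success argument in the system model guarantees; $V_i$ depends only on the arrival process after delivery of packet $i-1$ and is independent of the service times. So linearity of expectation suffices and no independence beyond that is actually required for the mean. The main obstacle is therefore not analytical but bookkeeping: pinning down precisely what $Z_a$ collects — whether it is $1/p_a$, $(1-p_a)/p_a$, or includes an extra $+1$ from the discrete-time "staircase" drop convention in \eqref{e2} and Fig.~\ref{staircase_AoI} — and making sure the geometric-distribution support matches the definition of $\Delta(t)$ so that the constant is accounted for consistently. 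Once the conventions are fixed, the monotonicity argument in step (iv) immediately converts the exact mean into the stated upper bound.
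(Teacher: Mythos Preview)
Your proposal is correct and matches the paper's own proof essentially line for line: the paper decomposes $A_i = W_{i-1} + V_i + W_i$, computes $\mathbb{E}[W_i] = 1/\mu_{\phi,L}$ and $\mathbb{E}[V_i] = (1-p_a)/p_a =: Z_a$ from the geometric distributions, and sums. Your explicit monotonicity step (iv) is in fact a small improvement in rigor over the paper, which simply plugs $\mu_{\phi,L}$ in directly; and your bookkeeping worry about $Z_a$ is resolved by the paper's convention $Z_a = (1-p_a)/p_a$.
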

\begin{proof}
A detailed proof is given in Appendix D.    
\end{proof}
\begin{theorem}   \label{theorem5}
The upper bound of the \ac{PAoI} measured at the typical \ac{IoT} device is  
\begin{equation}
\begin{aligned}
&A_{i,P}^{U}= Z_{a} + \frac{1}{\mu _{\phi,L}} + \frac{1}{q_{s}},
\label{e16p}
\end{aligned}
\end{equation}
where lower bound of $\mu _{\phi }$ gives an upper bound of the \ac{PAoI} and $q_{s} = \mu_{\phi ,L}+p_{a}\left ( 1-\mu _{\phi ,L} \right )$. 
\end{theorem}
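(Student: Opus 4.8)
The plan is to take expectations in the sample-path identity \eqref{e2p}, namely $A_i=\hat W_{i-1}+Y_i$ with $Y_i=W_i+V_i$ from \eqref{Yip}, so that $\mathbb{E}[A_i]=\mathbb{E}[\hat W_{i-1}]+\mathbb{E}[W_i]+\mathbb{E}[V_i]$, and then to evaluate the three terms separately. Because the transmitter sends the freshest available packet in every slot and, as argued through the Geo/Geo/1 reduction in the system model, the per-slot transmission-success event is i.i.d.\ with probability $\mu_\phi$, the time $W_i$ from the generation of the $i$-th packet to the first successful slot is geometric with parameter $\mu_\phi$, irrespective of which replacements pass through service; hence $\mathbb{E}[W_i]=1/\mu_\phi$. (Note that preemption does not change the law of $W_i$; it only changes which packet is actually delivered, and therefore affects only $\hat W_i$.) The idle gap $V_i$ between the $(i-1)$-th delivery and the next Bernoulli$(p_a)$ arrival is geometric with parameter $p_a$, and its mean is precisely the quantity $Z_a$ already isolated in the proof of Theorem~\ref{theorem4}, which I will reuse.

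The crux is $\mathbb{E}[\hat W_{i-1}]$, the mean number of slots spent in service by the \emph{freshest} delivered update. I would avoid the naive route $\mathbb{E}[\hat W_i]=\mathbb{E}[W_i]-\mathbb{E}[N_i]\,\mathbb{E}[\tau_n]$ (Wald): it is invalid here because $X_{N_i}=\sum_{n=1}^{N_i}\tau_n$ is constrained by $X_{N_i}\le W_i$ and $N_i$ is not a stopping time adapted to the $\tau_n$'s, so this estimate is badly off. Instead, read the service window of batch $i$ backwards from its delivery slot: $\hat W_i=W_i-X_{N_i}$ is exactly the number of slots one steps back before meeting either a replacement arrival or the original generation instant $G_{t_i}$. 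Since the arrival indicators are i.i.d.\ Bernoulli$(p_a)$ and independent of the success indicators that fix $W_i$, the backward distance to the most recent arrival is geometric$(p_a)$, truncated at $W_i$ by the original arrival; equivalently $\hat W_i=\min(R,W_i)$ with $R\sim\mathrm{Geom}(p_a)$ independent of $W_i\sim\mathrm{Geom}(\mu_\phi)$. Consequently $\mathbb{P}[\hat W_i>m]=\big[(1-p_a)(1-\mu_\phi)\big]^m$, so $\hat W_i$ is geometric with parameter $q_s=1-(1-p_a)(1-\mu_\phi)=\mu_\phi+p_a(1-\mu_\phi)$ and $\mathbb{E}[\hat W_i]=1/q_s$. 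As a consistency check, letting $p_a\to0$ forces $\hat W_i=W_i$ and reproduces the factor $2/\mu_\phi$ of Theorem~\ref{theorem4}.

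Collecting the three contributions gives $\mathbb{E}[A_i]=Z_a+1/\mu_\phi+1/q_s$. To turn this into a bound, observe that both $1/\mu_\phi$ and $1/q_s=1/\big(\mu_\phi+p_a(1-\mu_\phi)\big)$ are nonincreasing in $\mu_\phi$, so substituting the lower bound $\mu_{\phi,L}$ from Theorem~\ref{theorem1} (resp.\ Theorem~\ref{theorem3}) for $\mu_\phi$ can only inflate the right-hand side; this yields $A_{i,P}^U=Z_a+1/\mu_{\phi,L}+1/q_s$ with $q_s=\mu_{\phi,L}+p_a(1-\mu_{\phi,L})$, as claimed. I expect the main obstacle to be the rigorous justification of the $\min$-of-two-independent-geometrics representation of $\hat W_i$: one must argue carefully that the backward distance to the last replacement is independent of the service length $W_i$ (using the independence of the arrival and success streams) and handle the boundary case in which no replacement arrives before delivery, so that the original packet itself is delivered; once this is in place the remaining steps are routine bookkeeping and monotonicity.
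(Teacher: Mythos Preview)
Your proposal is correct and reaches the same conclusion as the paper, but the key step---evaluating $\mathbb{E}[\hat W_i]$---is handled quite differently. The paper conditions on $W_i=s$, writes $\mathbb{P}[\hat W_i=m\mid W_i=s]$ via a negative-binomial count of the $N_i$ replacement arrivals in $s-m$ slots followed by $m$ arrival-free slots, then marginalizes over the geometric law of $W_i$ and simplifies the resulting double sum to obtain $\mathbb{P}[\hat W_i=m]=q_s(1-q_s)^{m-1}$. You instead read the service window backwards and identify $\hat W_i=\min(R,W_i)$ with $R\sim\mathrm{Geom}(p_a)$ independent of $W_i\sim\mathrm{Geom}(\mu_\phi)$, from which the $\mathrm{Geom}(q_s)$ law and $\mathbb{E}[\hat W_i]=1/q_s$ follow in one line via $\mathbb{P}[\hat W_i>m]=\big[(1-p_a)(1-\mu_\phi)\big]^m$. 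Your route is shorter and more transparent, and it makes the structure of $q_s=1-(1-p_a)(1-\mu_\phi)$ obvious; the paper's computation, while heavier, has the merit of being mechanical and not requiring the independence/truncation insight you flag as the main obstacle. One further difference: you carry the exact $\mu_\phi$ through and invoke monotonicity of $1/\mu_\phi$ and $1/q_s$ in $\mu_\phi$ at the end to pass to $\mu_{\phi,L}$, whereas the paper simply works with $\mu_{\phi,L}$ from the outset without stating this step; your version is cleaner on that point.
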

\begin{proof}
A detailed proof is given in Appendix E.    
\end{proof}
We are interested in the upper bound of \ac{PAoI} since the \ac{PAoI} measures the maximum time elapsed since the last received packet at the destination was generated, just before a new packet is received at the destination. This gives a guarantee on the \ac{PAoI} of the system. In the next section, first, we validate and study the trends of the \ac{JSP} and then investigate the optimal slot partitioning factor to optimize either the \ac{JSP} or the \ac{PAoI}.

\section{Simulation Results}
In this section, we present simulation results to verify the derived expressions. Throughout this section, unless otherwise specified, we consider the system parameters as $E_{th} = 10$ mJ, $\alpha = 3$, $\sigma = 10$ bits, $\eta = 0.9$, $\xi = 0.4$, $\tau = 1$ sec, $\lambda = 0.003 $ m$^{-2}$,  $R = 60$ m, $B = 10$ kHz. 

\begin{figure}
\centering
\includegraphics[width=8.8cm, height=7cm]{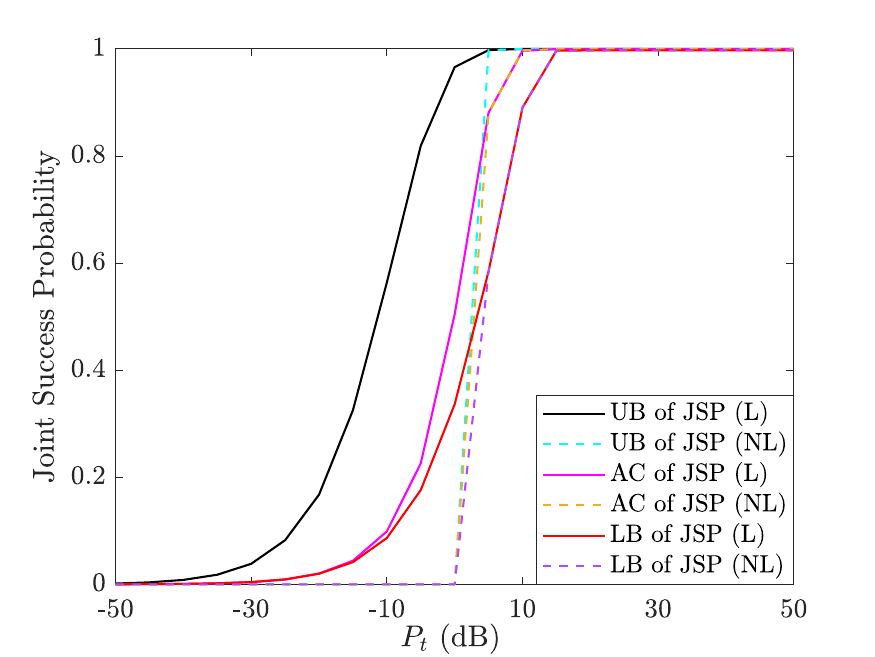}
\caption{The accuracy of the \ac{UB}, \ac{AC} and \ac{LB} of the \ac{JSP}, $\mu _{\phi }$ for both \ac{L EH} and \ac{NL EH} models with respect to the transmit power.}
\label{plot_1}   
\end{figure}

A wide range of bandwidths, from a few kHz to several GHz, can be utilized by IoT devices, contingent upon their applications and specifications. IoT devices with low bandwidth operate within the kHz range, making them ideal for low-power communication e.g., AM/FM radio that involves the transmission of minimal data in the range of a few Kbps using protocols like LoRaWAN and Zigbee. Medium-bandwidth IoT devices operate within the range of a few MHz, so moderate data transmission in the Mbps range can be used for real-time monitoring and control through protocols like Bluetooth and LTE-M.

According to the \ac{L EH} model as shown by the dashed lines in Fig. \ref{plot_1}, the actual value of \ac{JSP} matches the \ac{LB} for lower transmit power values and the \ac{UB} for higher transmit power values. On the contrary, we see from Fig.~\ref{plot_2} that for a varied range of deployment areas, the lower and the upper bounds closely follow the actual \ac{JSP}. The \ac{JSP} monotonically increases with the transmit power. This is because, with an increase in $P_t$, the \ac{EH} increases, while it has no impact on the \ac{SIR}, where $P_t$ appears both in the numerator and the denominator, and hence cancels out. Interestingly, with an increase of $R$ the \ac{JSP} first increases, reaches an optimal value and then decreases further. The initial increase is due to the enhancement in the \ac{EH} caused by an increase in the number of transmitters. Albeit, this deteriorates the \ac{SIR}, but for lower values of $R$, the \ac{SIR} already exceeds its corresponding threshold. On the contrary, for larger values of $R$, the \ac{SIR} deteriorates due to an increase in the number of interferers.

\ac{NL EH} as shown by dotted lines in Fig. \ref{plot_1} and Fig. \ref{plot_4NL} embodies a real-world scenario, as it simulates a realistic system where power constraints (both minimum and maximum) are imposed; specifically, no energy is harvested below a designated power threshold, and there exists a limit on the maximum \ac{EH}, which is essential in practical energy-harvesting systems.  

The plots of both figures, for the \ac{L EH} and \ac{NL EH} models, overlap at higher power levels, but separate at lower power levels. At low power levels, the plots diverge due to the \ac{NL EH} scenario, which establishes a minimum power threshold $P_{r,min}$ beneath which energy cannot be harvested. This results in a significant decline in \ac{JSP} and consequently the $\xi^{*}$ for maximizing the lower bound of \ac{JSP}, whereas in the \ac{L EH} scenario, the IoT continues to harvest energy linearly, yielding higher \ac{JSP}. At higher power levels, both scenarios yield similar \ac{JSP} values, as in the \ac{NL} case, \ac{EH} saturates at $P_{r,max}$, whereas in the \ac{L} case, \ac{EH} increases linearly with power and the \ac{EH}, both capped and uncapped are large enough to exceed the energy threshold, $E_{th}$ also the SIR condition is readily fulfilled. 

\begin{figure} 
\centering
\includegraphics[width=8.8cm, height=7cm]{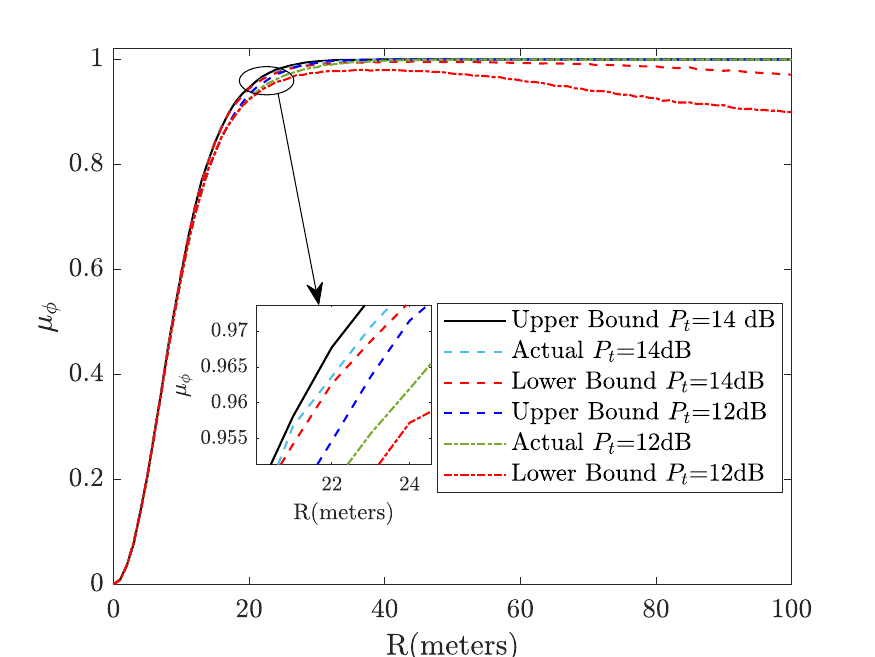}
\caption{Lower and upper bounds of joint success probability, $\mu _{\phi}$ with respect to the radius, R compared with the actual simulation when received power $P_{r}$ lies in between $P_{r,min}$ and $P_{r,max}$.}
\label{plot_2}
\end{figure}

\begin{figure}
\centering
\includegraphics[width=8.8cm, height=7cm]{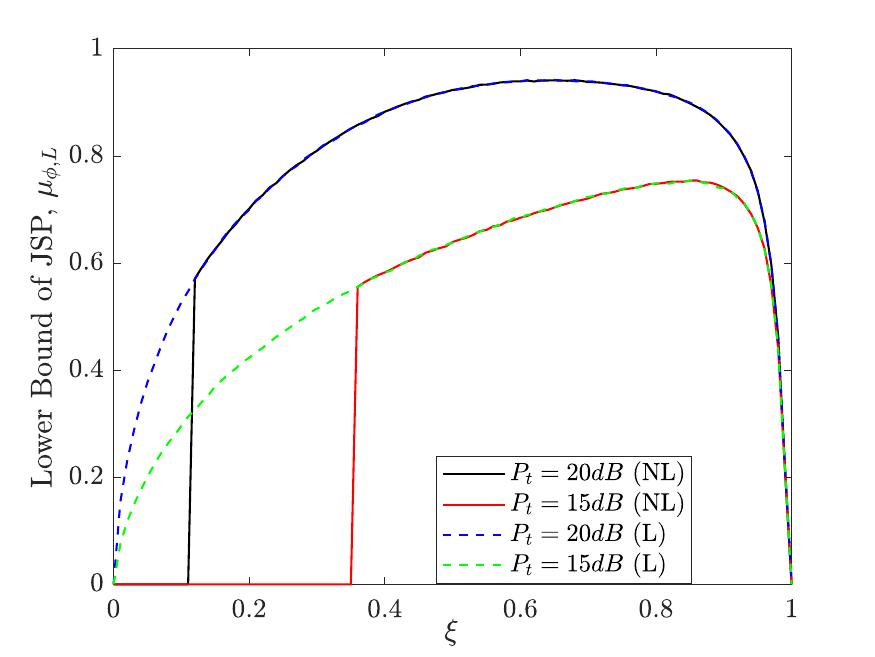}
\caption{Lower bound of joint success probability, $\mu _{\phi,L}$ with respect to fraction of time slot, $\xi$ for radius = 50 m.}
\label{plot_3}
\end{figure}

\ac{L EH} model as shown by dashed lines in Fig.~\ref{plot_3} and Fig.\ref{fig_16pre} confirm that even at small values of $\xi$, some energy is still harvested, albeit small. As $\xi$ increases, the \ac{EH} also increases, and the probability of meeting both the energy and \ac{SIR} thresholds improves. Therefore, the \ac{JSP} curve will likely increase smoothly while \ac{PAoI} will likely decrease smoothly since, there is no cap on the harvested energy. The Fig. \ref{fig_16pre} also illustrates that the \ac{PAoI} for the preemptive queue discipline is consistently lower than that of the non-preemptive queue discipline for all values of the time slot fraction, $\xi$. This outcome proves that the preemptive queue discipline outperforms the non-preemptive approach.
As $\xi$ further increases, despite higher energy harvesting, the overall \ac{JSP} declines because the system fails to meet the \ac{SIR} requirement due to insufficient transmission time. This trade-off between energy harvesting and data transmission time results in the \ac{JSP} curve declining as $\xi$  approaches 1.
In other words, when $\xi$ is small, the \ac{IoT} device does not harvest sufficient energy to be able to successfully decode the data packets in the data transmission phase, resulting in lower $\mu _{\phi,L}$, higher $A_{i,NP}^{U}$ and higher $A_{i,P}^{U}$ respectively. Similarly, when $\xi$ is large, the data transmission interval is low, resulting in a higher link rate requirement. This degrades $\mu _{\phi,L}$, $A_{i,NP}^{U}$ and $A_{i,P}^{U}$ respectively. This establishes the existence of an optimal slot partitioning factor $\xi^{*}$ that maximizes JSP and minimizes PAoI, for a disc of a given radius, corresponding to each value of transmit power. This is clearly demonstrated in Fig. \ref{plot_4} for discs of varying radii, where we can also  observe that as the power transmitted increases, \ac{EH} also increases; thus, to further increase the maximum possible \ac{JSP}, greater time can be allotted to information decoding, thereby decreasing $\xi^{*}$. In conclusion,  $\xi^{*}$ for lower values of transmit power are always greater than $\xi^{*}$ values for higher values of transmit power.

In \ac{NL EH} model, as shown by dotted lines in Fig.~\ref{plot_3} and Fig.\ref{fig_16pre}. At low $\xi$, if the received power $P_{r}$ is below the $P_{r,min}$ threshold, then no energy is harvested. This can result in a sharper decline or a flat region in the \ac{JSP} and \ac{PAoI} plot, due to the energy constraint not being met. Once $P_{r}$ exceeds $P_{r,min}$, energy harvesting starts and \ac{JSP} will increase and \ac{PAoI} will decrease, but at a slower rate due to the conditions and caps imposed on $P_{r}$. When the received power exceeds $P_{r,max}$, the energy is capped, meaning that even with a further increase in $\xi$, the energy harvesting no longer increases linearly. This results in the \ac{JSP} and \ac{PAoI} curve flattening at high $\xi$ values. As $\xi$ increases further, more time is spent harvesting energy and less time is available for data transmission. This reduction in available time for transmitting data negatively impacts the system's ability to meet the \ac{SIR} requirement. Specifically, the threshold for the \ac{SIR}, represented by $\beta$ increases as $\xi$ increases because the available time for transmission becomes smaller. As $\xi$ approaches 1, the transmission time (1 - $\xi$) becomes close to zero, making it harder to meet the \ac{SIR} threshold, and thus reducing the \ac{JSP} and increase \ac{PAoI}. While more energy is harvested at high $\xi$, the benefit of increased energy is offset by the decreasing transmission time and the increasing SIR threshold $\beta$. Essentially, while the system might have sufficient energy, the data transmission reliability deteriorates, and the likelihood of successful joint energy harvesting and data transmission declines.

From \eqref{e16}, we see that the term $Z_{a}=\left ( \frac{1}{p_{a}}-1\right ) $ is independent of $\xi$ so $A_{i,NP}^{U}$ is minimized and $\mu _{\phi ,L}$ is maximized for the same optimal $\xi$. Upon differentiating \eqref{e16p} i.e. $A_{i,P}^{U}$ with respect to $\mu _{\phi ,L}$, we get,
$ \frac{d A_{i,P}^{U}}{d \mu _{\phi ,L}}=-\frac{1}{\mu _{\phi ,L}^{2}}-\frac{(1-p_{a})}{(\mu _{\phi ,L}+p_{a}(1-\mu _{\phi ,L} )^{2})^{2}}<0$ for the range $p_{a}\in[0,1]$. This  means that $A_{i,P}^{U}$ is monotonically decreasing with respect to $\mu _{\phi ,L}$ so $A_{i,P}^{U}$ is minimized simultaneously when $\mu _{\phi ,L}$ is maximized for the same optimal $\xi$. Thus we conclude that $\mu _{\phi ,L}$ is maximized while $A_{i,P}^{U}$, $A_{i,NP}^{U}$ are minimized for the same value of $\xi^{*}$ which is represented in Fig. \ref{plot_4}. Here we can also observe that the optimal $\xi$ plots for maximizing \ac{JSP} are the same as the optimal $\xi$ plots for minimizing \ac{PAoI} where the actual case plots lie in between the upper and lower bound plots. Another observation here is that as the radius increases, the plots of the bounds get farther away from each other and also from the actual case in a peculiar way that encompasses the crossing of the $R = 50$ m and $R = 200$ m plots of $\xi^{*}$ for maximizing the lower bound of JSP and minimizing the upper bound of PAoI. This is explained by Fig.~\ref{fig_14}, \ac{L EH} model as follows.
\begin{figure}
\centering
\includegraphics[width=8.8cm, height=7cm]{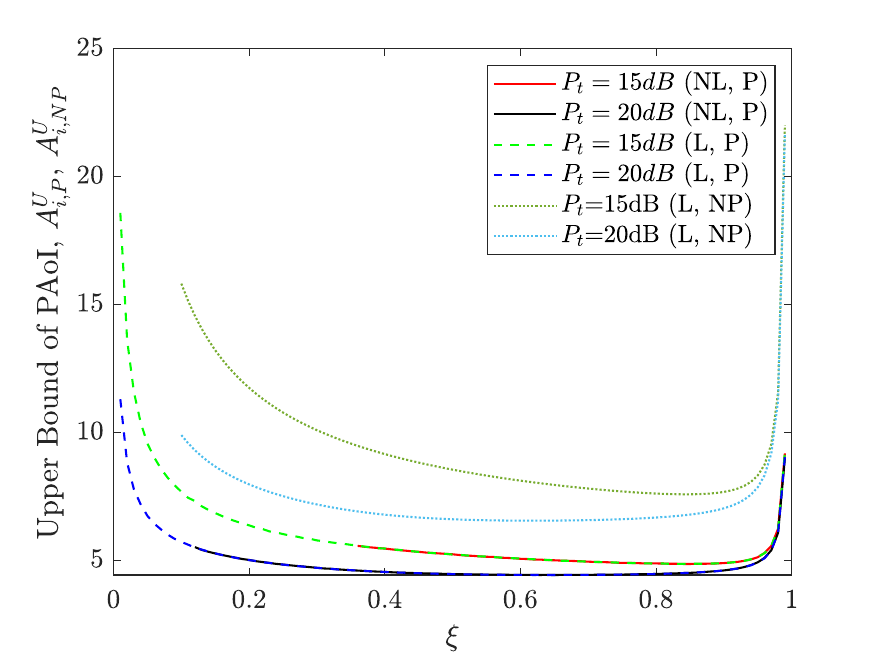}
\caption{Upper Bound of \ac{PAoI} under both preemptive and non-preemptive queueing disciplines, $A_{i,P}^{U}$, $A_{i,NP}^{U}$ with respect to fraction of time slot, $\xi$.}
\label{fig_16pre} 
\end{figure}

\begin{figure}
\centering
\includegraphics[width=8.8cm, height=7cm]{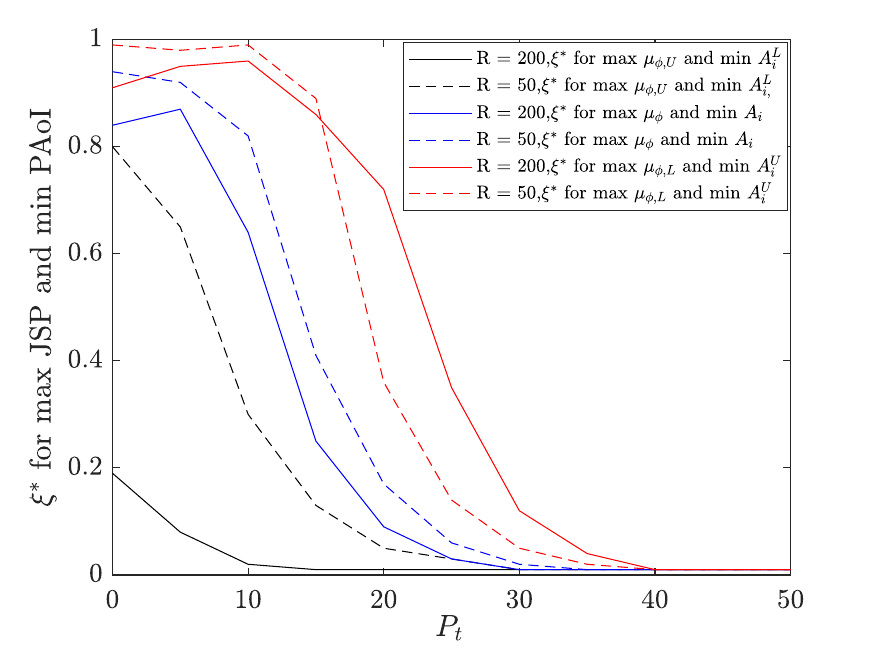}
\caption{Optimal slot partitioning, $\xi^{*}$ for maximizing upper $\mu_{\phi,U}$, actual $\mu_{\phi}$ and lower $\mu_{\phi,L}$ bounds of joint success probability and minimizing lower $A_{i,P}^{L}$, $A_{i,NP}^{L}$, actual $A_{i,P}$, $A_{i,NP}$ and upper $A_{i,P}^{U}$, $A_{i,NP}^{U}$ bounds of \ac{PAoI}, with respect to transmit power, $P_{t}$ in \ac{L EH} model.}
\label{plot_4}
\end{figure}

\begin{figure}
\centering
\includegraphics[width=8.8cm, height=7cm]{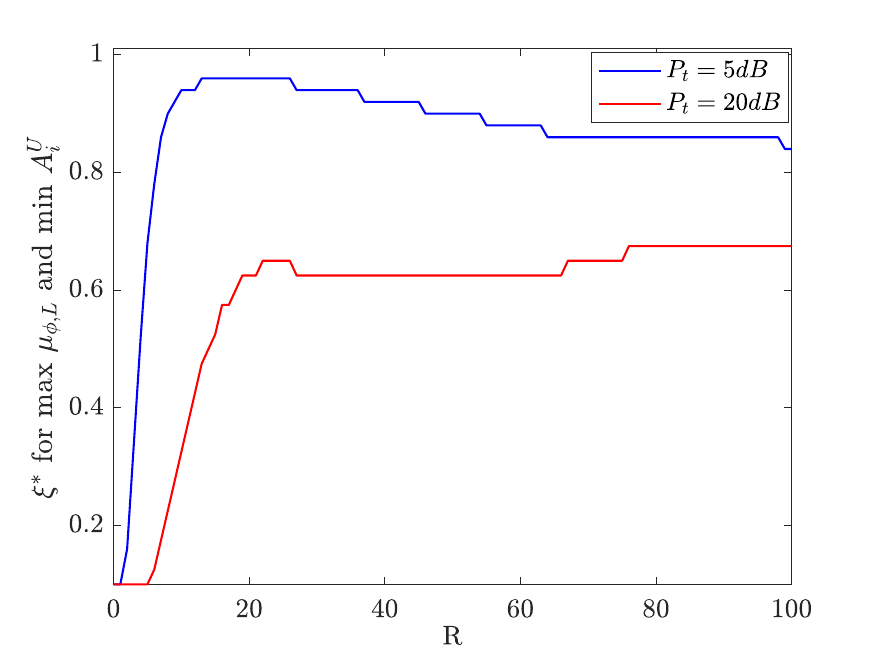}
\caption{Optimal slot partitioning, $\xi^{*}$ for maximizing lower bound of JSP, $\mu_{\phi ,L}$ and minimizing upper bounds of PAoI, $A_{i,P}^{U}$, $A_{i,NP}^{U}$ with respect to radius, R.}
\label{fig_14}
\end{figure}

\begin{figure}
\centering
\includegraphics[width=8.8cm, height=7cm]{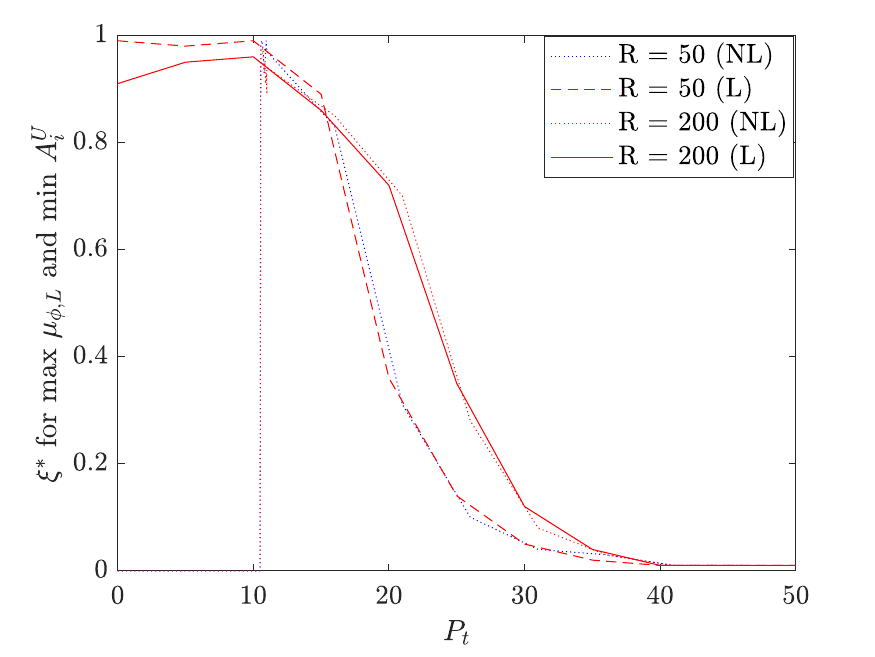}
\caption{Optimal slot partitioning, $\xi^{*}$ for maximizing lower bound of joint success probability, $\mu_{\phi,L}$ and minimizing upper bound of \ac{PAoI}, $A_{i,P}^{U}$, $A_{i,NP}^{U}$ with respect to transmit power, $P_{t}$ in both \ac{L EH} and \ac{NL EH} model.}
\label{plot_4NL}
\end{figure}

Fig.~\ref{fig_14}, \ac{L EH} model shows that when the transmit power is low, say $P_{t}$ = 5 dB and the radius is very small, the number of interfering transmitters is also very less and barely grows with a small increase in the radius. At the same time, the signaling transmitter is very close to the receiver, so the $\ac{SIR}>\beta$ threshold condition is easily met and the IoT device also harvests a small amount of energy, enough to meet the threshold, $\ac{EH}>E_{th}$. However, with a small increase in the radius, the amount of energy harvested drops, therefore, \ac{EH} time has to be increased by increasing $\xi^{*}$ which maximizes \ac{JSP} and minimizes \ac{PAoI}.

At higher R values, as R increases, the number of interference transmitters increases significantly. This means that excessive interference and insufficient signal power collectively hinder signal decoding, which in turn lowers the \ac{SIR}. Thus, in order to maximize \ac{JSP} and minimize \ac{PAoI}, more time must be allocated for signal decoding, by decreasing $\xi^{*}$.

When the transmit power is high, say $P_{t}$ = 20 dB and the radius is very small then with a small increase in the radius, $\xi^{*}$ is increased for the same reason as previously stated; however, for larger R values, as R increases, the number of interferers increase significantly. Since stronger transmit power overcomes interference more efficiently, so the decrease in the \ac{SIR} is not enough to hamper signal decoding, hence, more energy can be harvested by increasing $\xi^{*}$ to further maximize \ac{JSP}. This ensures that even with lesser time available for data transmission, the transmissions are successful a lot more often.
This efficient use of higher transmit power allows \ac{JSP} for higher values of radius to be greater than \ac{JSP} for lower values of radius.

A really interesting observation in Fig. \ref{plot_4} in the \ac{L EH} model in the actual case plot drawn for different R values is that, as the radius of the disc increases, the number of interferers increase significantly, thus decreasing \ac{SIR} so the signal decoding time needs to be increased by decreasing the optimum charging phase duration, $\xi^{*}$ in order to maximize \ac{JSP} for all values of transmit power, $P_{t}$. 

One clear distinction between the plots of the upper bound and the actual case of $\xi^{*}$ for minimizing PAoI in Fig. \ref{plot_4} is that, in the upper bound plot, at lower values of transmit power, $P_{t}$ and in the actual case plot, at all values of power transmitted by the signaling transmitter,   as the radius increases, the number of interferers increase significantly, thereby increasing interference. As a result, signal decoding time at IoT must be increased by decreasing the $\xi^{*}$.

On the contrary, we can see in the plot of the upper bound of $\xi^{*}$ for minimizing PAoI that, at higher values of transmit power, as the radius increases, so does the $\xi^{*}$. This occurs in the boundary case because higher transmit power values overcome interference more efficiently. As a result, IoT can leverage this situation to maximize JSP by increasing $\xi^{*}$. By increasing the $\xi^{*}$, IoT can harvest more energy, allowing signal decoding to improve even as signal decoding time at IoT decreases. 
\vspace{-0.3cm}
\section{Conclusion}
This paper considered an \ac{IoT} network where the \ac{IoT} devices harvest energy from the power received from the transmitters and use it to decode the data packets. We then define and derive the lower and upper bounds for the metric \ac{JSP} and the upper bound for the metric \ac{PAoI}.
Our analytical results show how time slot partitioning factor, deployment area, and other wireless network parameters affect \ac{JSP} and \ac{PAoI}. The key observations were that in NL EH, no energy is harvested below a certain power threshold, and the highest harvested energy is restricted, which is essential for the design of practical energy-harvesting systems. The preemptive queue discipline consistently outperforms the non-preemptive approach, as evidenced by the fact that the \ac{PAoI} is lower for the preemptive queue discipline than for the non-preemptive queue discipline for all values of the time slot fraction, $\xi$. An optimal slot partitioning factor $\xi^{*}$ exists that maximizes JSP and minimizes PAoI in both preemptive and non-preemptive queuing disciplines for a disc of a specific radius, corresponding to each value of transmit power. Further, we illustrate how this factor stays the same across each of the three cases. As the radius increases, the number of interferers increase, leading to increased interference, thus reducing $\xi^{*}$ in the upper bound plot of $\xi^{*}$ for minimizing PAoI, for only the lower values of transmit power, $P_{t}$ and for all values of transmit power, $P_{t}$ in the actual case plot. 
However, at higher values of transmit power, $\xi^{*}$ grows with radius in the upper bound plot of $\xi^{*}$ for minimizing PAoI, because higher transmit power values help the IoT to overcome interference more efficiently, thus maximizing JSP by increasing $\xi^{*}$ despite facing a decrease in decoding time. As a promising avenue of future work, one can analyze protocols like ALOHA and CSMA-CD with a channel access parameter which can be optimized using some suitable RL algorithm.
\appendices
\section{Proof of Theorem 1} \label{AppendixA} 
\vspace{-0.25cm}
The lower bound of $\mu _{\phi }$ is given as follows: $\mu _{\phi ,L}=\mathbb{P}[(E_{1}^{j,min}> E_{th}) 
, (\Upsilon_{1}^{j,min}> \beta )]$. Using \eqref{e11_lower} and \eqref{SII_min}, we can write $\mu _{\phi ,L}$ as 
\begin{equation}
\begin{aligned} 
\mu _{\phi ,L}\!&=\mathbb{P}\bigg(g_{1}^{j}>\underset{\Theta_{1}}{\underbrace{ d_{1}^{\alpha}\bigg[ \frac{E_{th}}{\eta\xi\tau P_{t}}-d_{K}^{-\alpha} \sum_{k=2}^{K}g_{k}^{j}}\bigg]}, g_{1}^{j}>\underset{\Theta_{2}}{\underbrace{\beta\sum_{k=2}^{K}g_{k}^{j}}} \bigg) \nonumber
\end{aligned}
\end{equation}
\begin{equation}
\begin{aligned} 
&= \mathbb{P} \big (g_{1}^{j}>\max(\Theta_{1} , \Theta_{2}) \big)\\
&= \underset{R_{1}}{\underbrace{\mathbb{P} \big(g_{1}^{j}>\Theta_{1} , \Theta_{1}>\Theta_{2}\big)}} + \underset{R_{2}}{\underbrace{\mathbb{P}\big(g_{1}^{j}>\Theta_{2} , \Theta_{2}>\Theta_{1}\big)}} 
\label{e15}
\end{aligned}
\end{equation}
In the above expression, there are two probabilities that need to be solved. We solve them one by one. If $\Theta_{1}>\Theta_{2}$, then we have $\big[E_{th}/\eta\xi\tau P_{t} - d_{K}^{-\alpha}Z \big]
>\beta d_{1}^{-\alpha} Z$, where $Z=\sum_{k=2}^{K}g_{k}^{j}$. To solve $R_{1}$, we have
\newline
\begin{equation}
\begin{aligned}   \label{R1}
R_{1}\!=&\mathbb{P}\Bigg(\!g_{1}^{j}\!>\!d_{1}^{\alpha}\!\bigg[\!\frac{E_{th}}{\eta\xi\tau P_{t}}\!-\!Z d_{K}^{-\alpha}\bigg]\!,\!
&\!\!\!\!Z\!\!<\!\frac{E_{th}}{\eta \xi \tau P_{t}\big[ \beta d_{1}^{-\alpha}\!+\! d_{K}^{-\alpha}\!\big]}\!\!\Bigg)
\end{aligned}
\end{equation} 
Let $t=d_{1}^{\alpha}\bigg[\frac{E_{th}}{\eta\xi\tau P_{t}}-Z d_{K}^{-\alpha}\bigg]$. If $t\geq 0 \Rightarrow Z\leq \frac{E_{th}d_{K}^{\alpha}}{\eta\xi\tau P_{t}}$. If $t<0 \Rightarrow Z>\frac{E_{th}d_{K}^{\alpha}}{\eta\xi\tau P_{t}}$. The condition $t<0$ will not occur since channel fading cannot be negative. Hence, we have 
\begin{equation}
\begin{aligned} 
R_{1}&= \mathbb{P}\bigg(g_{1}^{j}>t,t> 0 ,
Z<\frac{E_{th}}{\eta \xi \tau P_{t}\big[ \beta d_{1}^{-\alpha} + d_{K}^{-\alpha}\big]} \bigg)
\end{aligned}
\end{equation} 
On substituting $t=d_{1}^{\alpha}\bigg[\frac{E_{th}}{\eta\xi\tau P_{t}}-Zd_{K}^{-\alpha}\bigg]$, we have
\begin{equation}
\begin{aligned} 
R_{1}&= \mathbb{P}\bigg(g_{1}^{j}>d_{1}^{\alpha}\bigg[\frac{E_{th}}{\eta\xi\tau P_{t}}-Zd_{K}^{-\alpha}\bigg],\\
&Z< \min \bigg(\frac{E_{th}}{\eta \xi \tau P_{t}\big[ \beta   d_{1}^{-\alpha} + d_{K}^{-\alpha}\big]}, \frac{E_{th}d_{K}^{\alpha}}{\eta\xi\tau P_{t}} \bigg) \bigg)
\end{aligned}
\end{equation} 
Since,$\frac{E_{th}}{\eta \xi \tau P_{t}\!\big[\! \beta  d_{1}^{-\alpha}\!\!+ d_{K}^{-\alpha}\!\big]\!}$is always less than$\frac{E_{th}d_{K}^{\alpha}}{\eta\xi\tau P_{t}}$.We can write 
\begin{equation}
\begin{aligned} 
R_{\!1}&\!=\!\mathbb{P}\bigg(\!g_{1}^{j}\!>\!d_{1}^{\alpha}\!\bigg[\!\frac{E_{th}}{\eta\xi\tau P_{t}}\!-\!Zd_{K}^{-\alpha}\!\bigg]\!,\!
Z\!<\!\frac{E_{th}}{\eta \xi \tau P_{t}\big[ \beta   d_{1}^{-\alpha}\!+\!d_{K}^{-\alpha}\!\big]\!}\!\bigg)\\ 
&\!=\!\mathbb{E}_{Z}\!\bigg\{\!\!\exp\!\bigg(\!\!\!-\!d_{1}^{\alpha}\!\bigg[\!\frac{E_{th}}{\eta\xi\tau P_{t}}\!-\!Zd_{K}^{-\alpha}\!\bigg]\!,
Z\!\!<\!\!\frac{E_{th}}{\eta \xi \tau P_{t}\!\big[\!\beta   d_{1}^{-\alpha}\!+\!d_{K}^{-\alpha}\!\big]}\!\bigg)\!\!\bigg\}\\
&=\!\!\int_{0}^{\frac{E_{th}}{\eta \xi \tau P_{t}\!\big[\! \beta d_{1}^{-\alpha}\!+d_{K}^{-\alpha}\!\big]\!}}\exp\bigg(\!\!\!-\!d_{1}^{\alpha}\!\bigg[\!\frac{E_{th}}{\eta\xi\tau P_{t}}\!-\!Zd_{K}^{-\alpha}\bigg]\!\bigg)f_{Z}(Z)dZ
\label{R1appendix}
\end{aligned}
\end{equation} 
The PDF of $Z$, $f_{Z}(Z)$ is as follows: $f_{Z}(Z)=\frac{e^{-Z}Z^{k-1}}{(k-1)!}$. Using $f_{Z}(Z)$ in\eqref{R1appendix}, we get 
\begin{equation}
\begin{aligned}
R_{\!1}\!\!&=\!\!\int_{0}^{\frac{E_{th}}{\eta \xi \tau P_{t}\!\big[\! \beta   d_{1}^{-\alpha}\!\!+ d_{K}^{-\alpha}\big]\!\!}}\!\!\frac{Z^{k-1}}{(k-1)!}\! \nonumber
\end{aligned}
\end{equation}
\begin{equation}
\begin{aligned}
&\exp\!\bigg(\!\!\!-\!\!\bigg[\frac{E_{th}d_{1}^{\alpha}}{\eta\xi\tau P_{t}}\!-\!Zd_{1}^{\alpha}d_{K}^{-\alpha}+\!Z\bigg]\bigg)dZ
\label{e20}
\end{aligned}
\end{equation}
To Solve $R_{2}$, we can write
\begin{equation}
\begin{aligned} 
R_{2}&=\mathbb{P}\bigg(g_{1}^{j}>\beta Z, 
Z<\frac{E_{th}}{\eta \xi \tau P_{t}\big[ \beta   d_{1}^{-\alpha} + d_{K}^{-\alpha}\big]}\bigg)\\
&= \mathbb{E}_{Z}\bigg\{\exp\big ( -\beta Z \big ),Z> \frac{E_{th}}{\eta \xi \tau P_{t}\big ( \beta d_{1}^{-\alpha }+d_{K}^{-\alpha} \big)} \bigg\}\\
&= \int_{\frac{E_{th}}{\eta \xi \tau P_{t}\big ( \beta d_{1}^{-\alpha }+d_{K}^{-\alpha} \big)}}^{\infty } \exp\big (-\beta Z  \big )f_{Z}(z)dZ
\label{R2appendix}
\end{aligned}
\end{equation}
The PDF of $Z$, $f_{Z}(Z)$ is as follows: $f_{Z}(Z)=\frac{e^{-Z}Z^{k-1}}{(k-1)!}$. 
Using $f_{Z}(Z)$ in \eqref{R2appendix}, we get
\begin{equation}
\begin{aligned}
R_{2}&= \int_{\frac{E_{th}}{\eta \xi \tau P_{t}\big ( \beta d_{1}^{-\alpha }+d_{K}^{-\alpha} \big)}}^{\infty } \exp\big (-\beta Z -Z \big )\frac{Z^{k-1}}{(k-1)!}dZ
\label{e22}
\end{aligned}
\end{equation}
Replacing \eqref{e20} and \eqref{e22} in \eqref{e15} , we get \eqref{theorem1eq}.
\section{Proof of Theorem 2} \label{AppendixB}
\newpage
Upper Bound of $\mu _{\phi }$ is given as follows: $\mu _{\phi ,U} = \mathbb{P}[(E_{1}^{j,max}> E_{th}),(\Upsilon_{1}^{j,max}> \beta )]$. Using \eqref{e11_upper} and \eqref{SII_max}, we can write $\mu _{\phi}^{U}$ as
\begin{equation}
\begin{aligned} 
\mu _{\phi}^{U}\!&=\!\mathbb{P}\!\bigg(\!g_{1}^{j}\!>\!\underset{\Theta_{3}}{\underbrace{ d_{1}^{\alpha} \!  \bigg[ \!\frac{E_{th}}{\eta\xi\tau P_{t}}\!-\! d_{1}^{-\alpha} \!\sum_{k=2}^{K}g_{k}^{j}}\!   \bigg]  }\!  ,\!
g_{1}^{j}\!>\!\underset{\Theta_{4}}{\underbrace{\beta d_{1}^{\alpha} d_{K}^{-\alpha}\! \sum_{k=2}^{K}\!g_{k}^{j}}}\!\bigg)\!  
\nonumber
\end{aligned}
\end{equation}
\begin{equation}
\begin{aligned} 
&=\mathbb{P}(g_{1}^{j}>\max(\Theta_{3} , \Theta_{4}))\\
&=\underset{R_{3}}{\underbrace{\mathbb{P}(g_{1}^{j}>\Theta_{3} , \Theta_{3}>\Theta_{4})}} + \underset{R_{4}}{\underbrace{\mathbb{P}(g_{1}^{j}>\Theta_{4} , \Theta_{4}>\Theta_{3})}}.
\label{e24}
\end{aligned}
\end{equation}
In the above expression, there are two probabilities that need to be solved. We solve them one by one. If $\Theta_{3}>\Theta_{4}$ then we have $\bigg[ {E_{th}}/{\eta\xi\tau P_{t}}- d_{1}^{-\alpha} Z   \bigg]
>\beta  d_{K}^{-\alpha}Z$ ,where $Z=\sum_{k=2}^{K}g_{k}^{j}$. \\ \\
To solve $R_{3}$, we have
\begin{equation}
\begin{aligned} 
R_{3}\!\!=&\mathbb{P}(g_{1}^{j}\!>\!d_{1}^{\alpha}\!\bigg[\!\frac{E_{th}}{\eta\xi\tau P_{t}}\!-\!Zd_{1}^{-\alpha}\!\bigg]\nonumber , 
Z\!<\frac{E_{th}}{\eta \xi \tau P_{t}\big[ \beta     d_{K}^{-\alpha} + d_{1}^{-\alpha}\big]})\\
\end{aligned}
\end{equation}
Let $t = d_{1}^{\alpha}\bigg[\frac{E_{th}}{\eta\xi\tau P_{t}}-Zd_{1}^{-\alpha}\bigg]$. If  $t>0$
$\Rightarrow Z<\frac{E_{th}d_{1}^{\alpha}}{\eta\xi\tau P_{t}} $
If  $t<0$
$\Rightarrow Z>\frac{E_{th}d_{1}^{\alpha}}{\eta\xi\tau P_{t}} $.
The condition $t < 0$ will not occur
since channel fading cannot be negative. Hence, we have
\begin{equation}
\begin{aligned} 
R_{3}=&\mathbb{P}\bigg(g_{1}^{j}>t,t> 0 ,
Z<\frac{E_{th}}{\eta \xi \tau P_{t}\big[ \beta     d_{K}^{-\alpha} + d_{1}^{-\alpha}\big]} \bigg)
\end{aligned}
\end{equation}
On substituting $t = d_{1}^{\alpha}\bigg[\frac{E_{th}}{\eta\xi\tau P_{t}}-Zd_{1}^{-\alpha}\bigg]$, we have
\begin{equation}
\begin{aligned} 
R_{3}=&\mathbb{P}\bigg(g_{1}^{j}>d_{1}^{\alpha}\bigg[\frac{E_{th}}{\eta\xi\tau P_{t}}-Zd_{1}^{-\alpha}\bigg]\nonumber ,\\
&Z< min \bigg(\frac{E_{th}}{\eta \xi \tau P_{t}\big[ \beta     d_{K}^{-\alpha} + d_{1}^{-\alpha}\big]}, \frac{E_{th}d_{1}^{\alpha}}{\eta\xi\tau P_{t}} \bigg) \bigg).
\end{aligned}
\end{equation}
Since $\frac{E_{th}}{\eta \xi \tau P_{t}\big[ \beta     d_{K}^{-\alpha} + d_{1}^{-\alpha}\big]}$ is always less than $\frac{E_{th}d_{1}^{\alpha}}{\eta\xi\tau P_{t}}$. We can write
\begin{equation}
\begin{aligned}
R_{3}\!=& \mathbb{P}\bigg(\!g_{1}^{j}\!>\!d_{1}^{\alpha}\!\bigg[\frac{E_{th}}{\eta\xi\tau P_{t}}\!-\!Zd_{1}^{-\alpha}\!\bigg]\nonumber ,
Z\!<\! \frac{E_{th}}{\eta \xi \tau P_{t}\big[ \beta     d_{K}^{-\alpha}\! +\! d_{1}^{-\alpha}\big]}\! \bigg)\\ 
=&\!\mathbb{E}_{Z}\bigg\{\!\! \exp \!\bigg(\!\!\!-\!d_{1}^{\alpha}\!\bigg[\!\frac{E_{th}}{\eta\xi\tau P_{t}}\!-\!Zd_{1}^{-\alpha}\!\bigg]\!\nonumber ,
Z\!\!<\! \!\frac{E_{th}}{\eta \xi \tau P_{t}\!\big[ \beta     d_{K}^{-\alpha}\!\!\! +\! d_{1}^{-\alpha}\big]}\! \bigg)\!\! \bigg\}
\nonumber
\end{aligned}
\end{equation}
\begin{equation}
\begin{aligned} 
=\!\!&\int_{0}^{\frac{E_{th}}{\eta \xi \tau P_{t}\big[ \beta     d_{K}^{-\alpha} \!+ d_{1}^{-\alpha}\!\big]}}\!\!\exp\!\bigg(\!\!\!-\!d_{1}^{\alpha}\bigg[\!\frac{E_{th}}{\eta\xi\tau P_{t}}\!-\!Zd_{1}^{-\alpha}\!\bigg]\!\bigg)f_{Z}(Z)dZ.
\label{R1appendixB}
\end{aligned}
\end{equation}
The PDF of $Z$, $f_{Z}(Z)$ is as follows: $f_{Z}(Z)=\frac{e^{-Z}Z^{k-1}}{(k-1)!}$. Using $f_{Z}(Z)$ in \eqref{R1appendixB}, we get
\begin{equation}
\begin{aligned} 
R_{3}\!=&\int_{0}^{\!\frac{E_{th}}{\eta \xi \tau P_{t}\big[ \beta     d_{K}^{-\alpha} + d_{1}^{-\alpha}\big]}}\!
&\!\!\!\!\!\exp\bigg(\!\!-\!\bigg[\frac{E_{th}d_{1}^{\alpha}}{\eta\xi\tau P_{t}}\bigg]\bigg)\!\frac{Z^{k-1}}{(k-1)!}
d{Z}.
\label{e27}
\end{aligned}
\end{equation}
To Solve $R_{4}$, we can write
\begin{equation}
\begin{aligned} 
R_{4}=&\mathbb{P} \bigg(g_{1}^{j}>\beta d_{1}^{\alpha }d_{K}^{-\alpha }Z, 
Z<\frac{E_{th}}{\eta \xi \tau P_{t}\big[ \beta     d_{K}^{-\alpha} + d_{1}^{-\alpha}\big]}\bigg)
\nonumber
\end{aligned}
\end{equation}
\begin{equation}
\begin{aligned} 
=& \mathbb{E}_{Z}\bigg\{\exp\big ( -\beta d_{1}^{\alpha }d_{K}^{-\alpha } Z \big ),Z> \frac{E_{th}}{\eta \xi \tau P_{t}\big ( \beta d_{K}^{-\alpha }+d_{1}^{-\alpha} \big)} \bigg\}\\
=& \int_{\frac{E_{th}}{\eta \xi \tau P_{t}\big ( \beta d_{K}^{-\alpha }+d_{1}^{-\alpha} \big)}}^{\infty } \exp\big (-\beta d_{1}^{\alpha }d_{K}^{-\alpha }Z  \big )f_{Z}(z)dZ.
\label{R2appendixB}
\end{aligned}
\end{equation}
The PDF of $Z$, $f_{Z}(Z)$ is as follows: $f_{Z}(Z)=\frac{e^{-Z}Z^{k-1}}{(k-1)!}$. \\ Using $f_{Z}(Z)$ in \eqref{R2appendixB}, we get
\begin{equation}
\begin{aligned}
R_{4}\!=& \!\!\int_{\frac{E_{th}}{\eta \xi \tau P_{t}\big (\! \beta d_{K}^{-\alpha }\!+d_{1}^{-\alpha} \big)}}^{\infty }\!\!\!\! \exp\big (-\beta d_{1}^{\alpha }d_{K}^{-\alpha } Z -Z \big )\frac{Z^{k-1}}{(k-1)!}dZ.
\label{e29}
\end{aligned}
\end{equation}
Replacing \eqref{e27} and \eqref{e29} in \eqref{e24}, we get,
\eqref{theoremUeq}
\vspace{-0.1in}
\vspace{0.404cm}

\section{Proof of Theorem 3} \label{AppendixC}
The lower bound of $\mu _{\phi }$ is given as follows: $\mu _{\phi ,L}=\mathbb{P}[(E_{1}^{j}> E_{th}) 
, (\Upsilon_{1}^{j,min}> \beta )]$. Using \eqref{e2_lower} and \eqref{SII_min}, we can write $\mu _{\phi ,L}$ as 
\begin{equation}
\begin{aligned} 
\mu _{\phi ,L}&= P\left [ g_{1}^{j}>\underset{\Theta_{5}}{\underbrace{d_{1}^{\alpha }\left ( \frac{E_{th}}{\eta \xi \tau P_{t}}-d_{1}^{-\alpha }\sum_{k=2}^{K}g_{k}^{j} \right )}} ,g_{1}^{j}>\underset{\Theta _{6}}{\underbrace{\beta \sum_{k=2}^{K}g_{k}^{j}}}\right ]\\&
= P\left [ g_{1}^{j}>max(\Theta_{5},\Theta_{6}) \right ]\nonumber
\end{aligned} 
\end{equation}
\begin{equation}
\begin{aligned} 
&= \underset{R_{5}}{\underbrace{\mathbb{P}\left ( g_{1}^{j}>\Theta_{5},\Theta_{5}>\Theta_{6} \right )}}+\underset{R_{6}}{\underbrace{\mathbb{P}\left ( g_{1}^{j}>\Theta_{6},\Theta_{6}>\Theta_{5} \right )}}
\label{eq33}
\end{aligned} 
\end{equation}
In the above expression, there are two probabilities that need to be solved. We solve them one by one. If $\Theta_{5}>\Theta_{6}$, then we have $d_{1}^{\alpha }\left ( \frac{E_{th}}{\eta \xi \tau P_{t}}-d_{1}^{-\alpha }Z \right )>\beta Z$ where $Z= \sum_{k=2}^{K}g_{k}^{j}$.
To solve $R_{5}$, we have\\
\begin{equation}
\begin{aligned}
R_{5}\!=\!\mathbb{P}\!\left ( \!g_{1}^{j}\!>\!d_{1}^{\alpha }\!\left ( \!\frac{E_{th}}{\eta \xi \tau P_{t}}-d_{1}^{-\alpha }Z\! \right ), Z\!<\!\frac{E_{th}d_{1}^{\alpha }}{\eta \xi \tau P_{t}(1+\beta )}\! \right )
\end{aligned}
\end{equation}
Let $t = d_{1}^{\alpha }\left ( \frac{E_{th}}{\eta \xi \tau P_{t}}-d_{1}^{-\alpha }Z \right )$. If $t \geq 0$ $\Rightarrow Z<\frac{E_{th}d_{1}^{\alpha }}{\eta \xi \tau P_{t}}$. If $t < 0$ $\Rightarrow Z>\frac{E_{th}d_{1}^{\alpha }}{\eta \xi \tau P_{t}}$. The condition $t<0$ will not occur since channel fading cannot be negative. Hence we have
\begin{equation}
\begin{aligned}
R_{5} = \mathbb{P}\left ( g_{1}^{j}>t,t>0,Z<\frac{E_{th}d_{1}^{\alpha }}{\eta \xi \tau P_{t}(1+\beta )} \right ) 
\end{aligned}
\end{equation}
On substituting $t = d_{1}^{\alpha }\left ( \frac{E_{th}}{\eta \xi \tau P_{t}}-d_{1}^{-\alpha }Z \right )$, we have 
\begin{equation}
\begin{aligned}
R_{5}\!\!&=\!\!\mathbb{P}\!\left (\!\! g_{1}^{j}\!\!>\!\!d_{1}^{\alpha }\!\!\left [\! \frac{E_{th}}{\eta \xi \tau P_{t}}\!-\!d_{1}^{-\alpha }Z \!\right ]\!\!,Z\!\!<\!\!min\!\!\left ( \!\frac{E_{th}d_{1}^{\alpha }}{\eta \xi \tau P_{t}}\!,\!\frac{E_{th}d_{1}^{\alpha }}{\eta \xi \tau P_{t}(1\!\!+\!\!\beta )}\!\! \right )\!\! \right )\\&
=\mathbb{P}\left ( g_{1}^{j}\!\!>\!\!d_{1}^{\alpha }\!\!\left [\! \frac{E_{th}}{\eta \xi \tau P_{t}}\!-\!d_{1}^{-\alpha }Z \!\right ]\!\!,Z\!\!<\!\!\frac{E_{th}d_{1}^{\alpha }}{\eta \xi \tau P_{t}(1\!\!+\!\!\beta )}\!\! \right )\\&
=\mathbb{E}_{Z}\left\{ exp\left ( - d_{1}^{\alpha }\!\!\left [\! \frac{E_{th}}{\eta \xi \tau P_{t}}\!-\!d_{1}^{-\alpha }Z \!\right ]\!\!,Z\!\!<\!\!\frac{E_{th}d_{1}^{\alpha }}{\eta \xi \tau P_{t}(1\!\!+\!\!\beta )} \right ) \right\}\\&
=\int_{z=0}^{\frac{E_{th}d_{1}^{\alpha }}{\eta \xi \tau P_{t}(1+\beta )}}exp\left ( -d_{1}^{\alpha }\left [ \frac{E_{th}}{\eta \xi \tau P_{t}}-d_{1}^{-\alpha }Z \right ] \right )f_{Z}(z)d_{z}
\label{eq36}
\end{aligned}    
\end{equation}
The PDF of $Z$, $f_{Z}(Z)$ is as follows: $f_{Z}(Z)=\frac{e^{-Z}Z^{k-1}}{(k-1)!}$. Using $f_{Z}(Z)$ in \eqref{eq36}, we get
\begin{equation}
\begin{aligned}
R_{5}=\int_{z=0}^{\frac{E_{th}d_{1}^{\alpha }}{\eta \xi \tau P_{t}(1+\beta )}}\frac{Z^{k-1}}{(k-1)!}exp\left ( -\left [ \frac{E_{th}d_{1}^{\alpha }}{\eta \xi \tau P_{t}} \right ] \right )dz
\label{eq37}
\end{aligned}    
\end{equation}
To solve $R_{6}$, we can write
\begin{equation}
\begin{aligned}
R_{6}& = \mathbb{P}\left [ g_{1}^{j}>\beta Z,Z>\frac{E_{th}d_{1}^{\alpha }}{\eta \xi \tau P_{t}(1+\beta )} \right ]\\&
= \mathbb{E}_{Z}\left\{exp\left ( -\beta Z\right ),Z>\frac{E_{th}d_{1}^{\alpha }}{\eta \xi \tau P_{t}(1+\beta )} \right\}\\&
= \int_{z=\frac{E_{th}d_{1}^{\alpha }}{\eta \xi \tau P_{t}(1+\beta )}}^{\infty }exp\left ( -\beta Z \right )f_{Z}(z)dZ
\label{eq38}
\end{aligned}   
\end{equation}
The PDF of $Z$, $f_{Z}(Z)$ is as follows: $f_{Z}(Z)=\frac{e^{-Z}Z^{k-1}}{(k-1)!}$. Using $f_{Z}(Z)$ in \eqref{eq38}, we get
\begin{equation}
\begin{aligned}
R_{6} = \int_{z=\frac{E_{th}d_{1}^{\alpha }}{\eta \xi \tau P_{t}(1+\beta )}}^{\infty }\frac{Z^{k-1}}{(k-1)!}exp\left ( -(\beta+1) Z \right )dZ    
\label{eq39}
\end{aligned}    
\end{equation}
Replacing \eqref{eq37} and \eqref{eq39} in \eqref{eq33}, we get \eqref{theorem1eqC}
\section{Proof of Theorem 4} \label{AppendixD}
We can obtain 
$\mathbb{E}\big [ W_{i} \big ]\!=\!\sum\limits_{n=1}^{\infty }n\big [\mu _{\phi ,L}\big (1- \mu _{\phi ,L}\big )^{n-1} \big ]$.
Replacing $1- \mu _{\phi ,L} = t$, we get,\\
\begin{equation}
\begin{aligned}
&\mathbb{E}\big [ W_{i} \big ]=\!\sum\limits_{n=1}^{\infty }n\big ( 1\!-\!t \big )t^{n-1}= \frac{1}{1-t}= \frac{1}{ \mu _{\phi ,L}}.
\label{e33}
\end{aligned}
\end{equation}
Similarly for $V_{i}$ we can obtain
 $\mathbb{E}\big [ V_{i} \big ]= \sum_{n=0}^{\infty }n\big [p_{a}\big (1-p_{a}\big )^{n} \big ].$
Replacing $1-p_{a} = w$, we get,
\begin{equation}
\begin{aligned}
\mathbb{E}\big [ V_{i} \big ]= \sum_{n=0}^{\infty }n\big ( 1-w \big )w^{n}= \frac{w}{1-w}  \approx Z_{a},
\label{e35}
\end{aligned}
\end{equation}
Thus, the upper bound of \ac{PAoI} measured at the typical IoT is given by  
$A_{i,NP}^{U}= \mathbb{E}\big [ A_{i}|(\beta ,E_{th}):\phi  \big ]$. From \eqref{Yi} and \eqref{e2},
$A_{i,NP}^{U}= \mathbb{E}\big [ W_{i-1} \big ] + \mathbb{E}\big [V_{i}  \big ] + \mathbb{E}\big [W_{i}  \big ]$.
From \eqref{e33} and \eqref{e35}, we get \eqref{e16}.
\vspace{-0.2cm}
\section{Proof of Theorem 5} \label{AppendixE}
There are $N_{i}=n$ packet arrivals in (s-m) slots and  no packet arrivals in m slots.\\
\begin{equation}
\begin{aligned}
\mathbb{P}\!\left [ \hat{W_{i}}\!\!=m/W_{i}\!\!=\!\!s \right ]\!\!=&\mathbb{P}\!\left [ N_{i}\!\!=\!\textup{n packet arrivals in (s-m) slots }\!\!\right ]\\&*\mathbb{P}\left [ \textup{no packet arrivals in m slots} \right ]\\
=&\sum_{n=0}^{s-m}\mathbb{P}\left [ X_{n} \right ]*\left ( 1-p_{a} \right )^{m}\\
\end{aligned}    
\end{equation} 
We want to know the number of trials $(s-m)$ required to achieve a fixed number of successfully arrived replacement packets, n. where, n= 0 to (s-m), using negative binomial distribution we get,
\begin{equation}
\begin{aligned}
=\sum_{n=0}^{s-m}\binom{(s-m)-1}{n-1}p_{a}^{n}\left ( 1-p_{a} \right )^{(s-m)-n}*\left (1-p_{a}  \right )^{m}    
\end{aligned}    
\end{equation}
Putting n=1 because we have at least one arrived packet that is being serviced in the consecutive time slots gives,
\begin{equation}
\begin{aligned}
=\left\{\begin{matrix}
p_{a} &,(s=1)  \\
\sum_{n=1}^{(s-m)+1}\binom{s-m}{n-1}p_{a}^{n}\left ( 1-p_{a} \right )^{s-n} & ,(s\geq m> 1) \\
\left ( 1-p_{a} \right )^{m-1} & ,(s=m) \\
\end{matrix}\right.
\end{aligned}    
\end{equation}
PMF of $\hat{W_{i}}$ is as follows:
\begin{equation}
\begin{aligned}
\mathbb{P}&\left [  \hat{W_{i}}=m \right ]\\&=\sum_{s=m}^{\infty }\mathbb{P}\left [ \hat{W_{i}}=m/W_{i}=s \right ]\mathbb{P}\left [ W_{i}=s \right ]\\&
=\mathbb{P}\left [ \hat{W_{i}}=m/W_{i}=m \right ]\mathbb{P}\left [ W_{i}=m \right ] +\\&\sum_{s=m+1}^{\infty } \mathbb{P}\left [ \hat{W_{i}}=m/W_{i}=s \right ]\mathbb{P}\left [ W_{i}=s \right ]\\&
=\left ( 1-p_{a} \right )^{m-1}\mu _{\phi ,L}\left ( 1-\mu _{\phi ,L} \right )^{m-1}+\\&\underset{G}{\underbrace{\sum_{n=1}^{(s-m)+1}\!\!\!\!\binom{s-m}{n-1}p_{a}^{n}\left ( 1-p_{a} \right )^{s-n}\!\!\!\!\sum_{s=m+1}^{\infty }\!\!\!\!\mu _{\phi ,L}\left ( 1-\mu _{\phi ,L} \right )^{s-1}}}
\end{aligned}    
\end{equation}
Substituting $s-m=l$ in the above term G we get,
\begin{equation}
\begin{aligned}
G= \sum_{n=1}^{l+1}\!\!\binom{l}{n-1}p_{a}^{n}\left ( 1-p_{a} \right )^{l+m-n}\!\sum_{l=1}^{\infty }\mu _{\phi ,L}\left ( 1-\mu _{\phi ,L} \right )^{l+m-1}
\end{aligned}    
\end{equation}
Put $l+1=z$ and $n=1$, we get,
\begin{equation}
\begin{aligned}
G= p_{a}\left ( 1-p_{a} \right )^{m-1}\mu _{\phi ,L}\left ( 1-\mu _{\phi ,L} \right )^{m}\sum_{z=2}^{\infty }\left ( 1-\mu _{\phi ,L} \right )^{z-2}
\end{aligned}    
\end{equation}
Put $z-2=a$ and $\sum_{a=0}^{\infty }x^{a}=\frac{1}{1-x}$ for $|x|< 1$
\begin{equation}
\begin{aligned}
G= p_{a}\left ( 1-p_{a} \right )^{m-1}\left ( 1-\mu _{\phi ,L} \right )^{m}
\end{aligned}    
\end{equation}
Substituting G in $\mathbb{P}\left [  \hat{W_{i}}=m \right ]$ we get,
\begin{equation}
\begin{aligned}
\mathbb{P}\left [  \hat{W_{i}}=m \right ]=&\left ( 1-p_{a} \right )^{m-1}\mu _{\phi ,L}\left ( 1-\mu _{\phi ,L} \right )^{m-1}+\\&p_{a}\left ( 1-p_{a} \right )^{m-1}\left ( 1-\mu _{\phi ,L} \right )^{m}\\
=&q_{s}\left ( 1-q_{s} \right )^{m-1} \textup{for m = 1, 2....} 
\end{aligned}    
\end{equation}
where, $q_{s}=\mu _{\phi ,L}+p_{a}\left ( 1-\mu _{\phi ,L} \right )$
We can obtain $\mathbb{E}\left [ \hat{W_{i}} \right ]=\sum_{m=1}^{\infty }m\left [ q_{s}\left ( 1-q_{s} \right )^{m-1} \right ]$. Replacing $1-q_{s}=t$, we get,
\begin{equation}
\begin{aligned}
\mathbb{E}\left [ \hat{W_{i}} \right ]=\sum_{m=1}^{\infty }m\left [ q_{s}\left ( 1-q_{s} \right )^{m-1} \right ]=\frac{1}{1-t}=\frac{1}{q _{s}}
\label{e53p}
\end{aligned}    
\end{equation}
Similarly for $V_{i}$ we can obtain
 $\mathbb{E}\big [ V_{i} \big ]= \sum_{n=0}^{\infty }n\big [p_{a}\big (1-p_{a}\big )^{n} \big ].$
Replacing $1-p_{a} = w$, we get,
\newpage
\begin{equation}
\begin{aligned}
\mathbb{E}\big [ V_{i} \big ]= \sum_{n=0}^{\infty }n\big ( 1-w \big )w^{n}= \frac{w}{1-w}  \approx Z_{a},
\label{e54p}
\end{aligned}
\end{equation}

We can obtain $\mathbb{E}\left [W_{i} \right ]=\sum_{m=1}^{\infty }m\left [\mu _{\phi ,L}\left ( 1-\mu _{\phi ,L} \right )^{m-1} \right ]$. Replacing $1-\mu _{\phi ,L}=v$, we get,
\begin{equation}
\begin{aligned}
\mathbb{E}\left [W_{i} \right ]=\sum_{m=1}^{\infty }m\left [\mu _{\phi ,L}\left ( 1-\mu _{\phi ,L} \right )^{m-1} \right ]=\frac{1}{1-v}=\frac{1}{\mu _{\phi ,L}}
\label{e55p}
\end{aligned}    
\end{equation}

Thus, the upper bound of \ac{PAoI} measured at the typical IoT is given by  
$A_{i,P}^{U}= \mathbb{E}\big [ A_{i}|(\beta ,E_{th}):\phi  \big ]$. From \eqref{Yip} and \eqref{e2p},
$A_{i,P}^{U}= \mathbb{E}\big [\hat{ W_{i-1}} \big ] + \mathbb{E}\big [V_{i}  \big ] + \mathbb{E}\big [W_{i}  \big ]$.
From \eqref{e53p}, \eqref{e54p} and \eqref{e55p}, we get \eqref{e16p}.

\bibliographystyle{IEEEtran}
\bibliography{IEEEabrv,references}
\end{document}